\documentclass{article}

\usepackage{arxiv}
\usepackage{amsmath,amssymb,amsfonts}
\usepackage{algorithmic}
\usepackage{url}
\usepackage{mathtools}
\usepackage{enumerate}
\usepackage{graphicx}
\usepackage{textcomp}
\usepackage{subfigure}
\usepackage{multirow}
\usepackage{amsmath}
\usepackage{hyperref}
\usepackage[utf8]{inputenc} 
\usepackage[T1]{fontenc}    
\usepackage{hyperref}       
\usepackage{url}            
\usepackage{amssymb}
\usepackage{latexsym}

\usepackage{booktabs}       
\usepackage{amsfonts}       
\usepackage{nicefrac}       
\usepackage{microtype}      
\usepackage{lipsum}
\usepackage{graphicx}
\usepackage{hyperref}
\usepackage{amsthm}
\usepackage[ruled]{algorithm2e}

\newtheorem{prop}{Proposition}
\graphicspath{ {./images/} }

\title{Improved gradient descent-based chroma subsampling method for color images in VVC}

\author{
 Kuo-Liang Chung \\
  Department of Computer Science and Information Engineering\\
  National Taiwan University of Science and Technology\\
  No. 43, Section 4, Keelung Road, Taipei, 10672, Taiwan, R.O.C. \\
  \texttt{klchung01@gmail.com} \\
   \And
 Szu-Ni Chen \\
  Department of Computer Science and Information Engineering\\
  National Taiwan University of Science and Technology\\
  No. 43, Section 4, Keelung Road, Taipei, 10672, Taiwan, R.O.C. \\
  \And
 Yu-Ling Lee \\
  Department of Computer Science and Information Engineering\\
  National Taiwan University of Science and Technology\\
  No. 43, Section 4, Keelung Road, Taipei, 10672, Taiwan, R.O.C. \\
  \And
 Chao-Liang Yu \\
  Department of Computer Science and Information Engineering\\
  National Taiwan University of Science and Technology\\
  No. 43, Section 4, Keelung Road, Taipei, 10672, Taiwan, R.O.C. \\
}

\begin{document}
\maketitle
\begin{abstract}
 Prior to encoding color images for RGB full-color, Bayer color filter array (CFA), and digital time delay integration (DTDI) CFA images, performing chroma subsampling on their converted chroma images is necessary and important.
 In this paper, we propose an effective general gradient descent-based chroma subsampling method for the above three kinds of color images, achieving substantial quality and quality-bitrate tradeoff improvement of the reconstructed color images when compared with the related methods.
First, a bilinear interpolation based 2$\times$2 $t$ ($\in \{RGB, Bayer, DTDI\}$) color block-distortion function is proposed at the server side, and then in real domain, we prove that our general 2$\times$2 $t$ color block-distortion function is a convex function.
Furthermore, a general closed form is derived to determine the initially subsampled chroma pair for each 2$\times$2 chroma block. Finally, an effective iterative method is developed to improve the initially subsampled $(U, V)$-pair.
Based on the Kodak and IMAX datasets, the comprehensive experimental results demonstrated that on the newly released versatile video coding (VVC) platform VTM-8.0, for the above three kinds of color images, our chroma subsampling method clearly outperforms the existing chroma subsampling methods.
\end{abstract}

\keywords{Bayer color filter array (CFA) image \and chroma subsampling \and digital time delay integration (DTDI) CFA image \and block-distortion model \and versatile video coding (VVC) \and quality-bitrate tradeoff \and RGB full-color image}

\section{Introduction}\label{sec:I}
\begin{figure*}[t]
  \centering
    \includegraphics[scale=0.4]{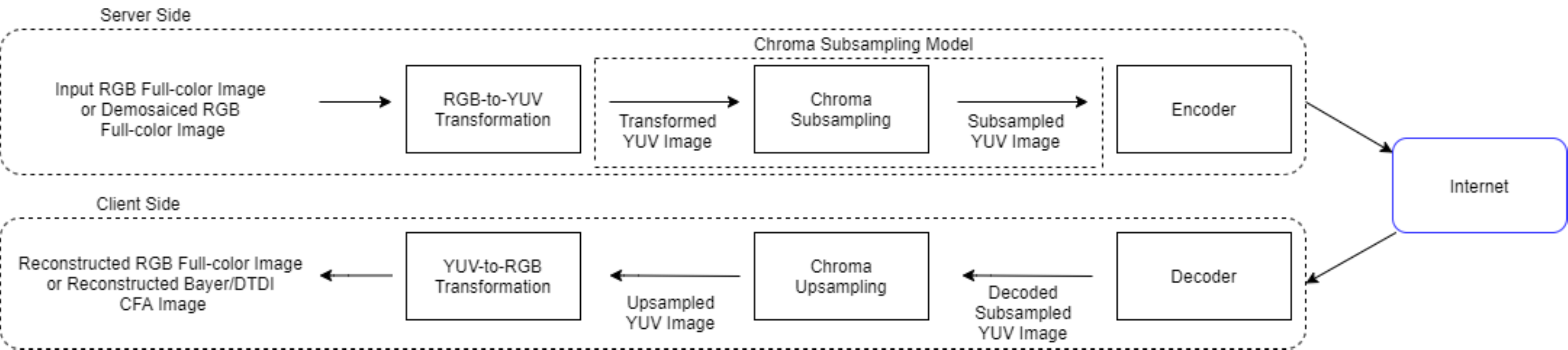}
  \caption{The chroma subsampling model in the coding system.}
  \label{pic:flow chart}
\end{figure*}
 In this study, we consider the chroma subsampling problem for the three kinds of color images, namely the RGB full-color image $I^{RGB}$, the Bayer color filter array (CFA) image $I^{Bayer}$ \cite{B.Bayer} which will be introduced in Subsection \ref{i_bayer}, and the digital time delay integration (DTDI) CFA image $I^{DTDI}$ \cite{E.Bodenstorfer} which will be introduced in Subsection \ref{i_dtdi}. For convenience, let the above three kinds of color images be denoted by the set {\bf CI} = \{$I^{RGB}$, $I^{Bayer}$, $I^{DTDI}$\}.

Before encoding the color image $\in$ {\bf CI} = \{$I^{RGB}$, $I^{Bayer}$, $I^{DTDI}$\}, except $I^{RGB}$, $I^{Bayer}$ and $I^{DTDI}$ should be demosaicked to RGB full-color images.
To demosaick $I^{Bayer}$ to a RGB full-color image, several demosaicking methods \cite{L.Zhang}, \cite{D.Kiku}, \cite{Li-2008} can be used, and in our study, the demosaicking method proposed by Kiku {\it et al.} \cite{D.Kiku} is used.
To demosaick $I^{DTDI}$ to a RGB full-color image, based on Kiku {\it et al.}'s residual interpolation idea, we developed a demosaicking method whose execution code can be accessed from the website \cite{DTDI}. Furthermore, following BT.601 \cite{BT601},
the RGB full-color image is transformed into a YUV image $I^{YUV}$ by using the following RGB-to-YUV conversion formula: 
\begin{equation}
\label{eq:RGB2YUV}
  \begin{small}
  \begin{bmatrix} Y_i \\ U_i \\ V_i \end{bmatrix} =
  \begin{bmatrix} 0.257 & 0.504 & 0.098 \\
                -0.148 & -0.291 & 0.439 \\
                  0.439 & -0.368 & -0.071 \end{bmatrix}
  \begin{bmatrix} R_i \\ G_i \\ B_i \end{bmatrix} +
  \begin{bmatrix} 16 \\ 128 \\ 128 \end{bmatrix}
  \end{small}
\end{equation}
in which $(R_i, G_i, B_i)$ and $(Y_i, U_i, V_i)$ indicate the RGB pixel and the converted YUV pixel triple-values, respectively, at the location $i$, $1 \leq i\leq 4$, according to the zigzag order in each 2$\times$2 RGB and YUV block-pair. It is noticeable that all discussion in this paper can be applied to the YCbCr color space because the color conversions, RGB-to-YCbCr and YCbCr-to-RGB,
are also linear as the color conversions, RGB-to-YUV and YUV-to-RGB.

In the chroma subsampling model, which is framed by a dotted box at the server side in Fig. \ref{pic:flow chart}, for 4:2:0, it determines one subsampled $(U, V)$-pair for each 2$\times$2 UV block; for 4:2:2, it determines one subsampled $(U, V)$-pair for each row of the 2$\times$2 UV block. 4:2:0 has been widely used in Bluray discs (BDs) and digital versatile discs (DVDs) for storing movies, sports, and TV shows. Throughout this paper, our discussion focuses on 4:2:0, although it is also applicable to 4:2:2.

After encoding the subsampled YUV image, which consists of the whole luma image and the subsampled chroma image, the encoded bit-stream is transmitted to the decoder via the internet. At the client side, which is shown in the lower part of Fig. \ref{pic:flow chart}, the decoded subsampled UV image is thus upsampled. Furthermore, the upsampled YUV image is transformed to the reconstructed RGB full-color image by the following YUV-to-RGB conversion:
\begin{equation}
  \label{eq:YUV2RGB}
  \begin{small}
  \begin{bmatrix} R_i \\ G_i \\ B_i \end{bmatrix} =
    \begin{bmatrix} 1.164 & 0 & 1.596 \\ 1.164 & -0.391 & -0.813 \\ 1.164 & 2.018 & 0 \end{bmatrix}
      \begin{bmatrix} Y_i-16 \\ U_i-128 \\ V_i-128 \end{bmatrix}
  \end{small}
\end{equation}
When the input image is $I^{Bayer}$ or $I^{DTDI}$, by Eq. (\ref{eq:YUV2RGB}), the upsampled YUV images can be directly transformed into the reconstructed Bayer and DTDI CFA images, respectively.

\subsection{Related Chroma Subsampling Work for Color Images} \label{sec:IA}
In this subsection, we introduce the related chroma subsampling work for $I^{RGB}$, $I^{Bayer}$, and $I^{DTDI}$.

\subsubsection{Related work for $I^{RGB}$}
For the input RGB full-color image $I^{RGB}$, the five commonly used chroma subsampling methods are 4:2:0(A), 4:2:0(L), 4:2:0(R), 4:2:0(DIRECT), and 4:2:0(MPEG-B) \cite{MPEG-B}, and for convenience, the five traditional methods are denoted by the set symbol {\bf CS}. In addition, we introduce the two state-of-the-art chroma subsampling methods, namely the IDID (interpolation-dependent image downsampling) method \cite{Y.Zhang} and the MCIM (major color and index map-based) method \cite{S.Wang}.
For each 2$\times$2 UV block $B^{UV}$ in the chroma image $I^{UV}$, the purpose of each existing chroma subsampling method is to determine the subsampled $(U, V)$-pair of $B^{UV}$.

4:2:0(A) determines the subsampled $(U, V)$-pair of $B^{UV}$ by averaging the four U entries and the four $V$ entries of $B^{UV}$, separately. 4:2:0(L) and 4:2:0(R) determine the subsampled $(U, V)$-pairs by averaging the two chroma entries in the left column and right column of $B^{UV}$, respectively. 4:2:0(DIRECT) determines the subsampled $(U, V)$-pair by selecting the top-left $(U, V)$-entry of $B^{UV}$. 4:2:0(MPEG-B) determines the subsampled $(U, V)$-pair by performing the 13-tap filter with mask [2, 0, -4, -3, 5, 19, 26, 19, 5, -3, -4, 0, 2]/64 on the top-left location of $B^{UV}$.

Deploying the new edge-directed interpolation \cite{X.Li} based upsampling process into chroma subsampling, Zhang {\it et al.} \cite{Y.Zhang} proposed an IDID method, and IDID outperforms the traditional chroma subsampling methods. Considering the two characteristics, few colors and flat background, in the computer-generated subimages in screen content images \cite{Lu-2011}, Wang {\it et al.} {\cite{S.Wang}} proposed a MCIM method to improve the IDID method.

\subsubsection{Related work for $I^{Bayer}$}\label{i_bayer}
Instead of saving RGB triple-value for each pixel in $I^{RGB}$, it saves only one color value, R, G, or B, for each pixel in $I^{Bayer}$ for reducing hardware costs. As shown in Fig. \ref{pic:CFA}, there are four Bayer CFA patterns which have been widely used in modern digital color cameras.
To reduce the paper length, our discussion focuses on $I^{Bayer}$ with the Bayer CFA pattern in Fig. \ref{pic:CFA}(a), denoted by [$G_1, R_2, B_3, G_4$], but our discussion is also applicable to the other three Bayer CFA patterns in Fig. \ref{pic:CFA}(b)-(d).

From the YUV-to-RGB conversion in Eq. (\ref{eq:YUV2RGB}), Chen {\it et al.} \cite{Chen-2009} observed that the R value is dominated by the Y and V values, and the B value is dominated by the Y and U values. Therefore, the subsampled $(U, V )$-pair of each $B^{UV}$ is set to $(U_3, V_2)$.
Although their method benefits the quality of the R and B components in the reconstructed Bayer CFA image, it does not benefit the quality of the reconstructed G component.

To improve Chen {\it et al.}’s chroma subsampling method, at the server side, Lin {\it et al.} \cite{C.Lin} proposed a 2$\times$2 Bayer CFA block-distortion function under the COPY-based upsampling process where the four estimated $(U, V)$-pairs of $B^{UV}$ just copy the subsampled $(U, V)$-pair of $B^{UV}$.
Then, differentiating the Bayer CFA block-distortion function, a closed form was derived to determine the subsampled $(U, V)$-pair of $B^{UV}$.
For convenience, Lin {\it et al.}’s method is called the DI (differentiating) method.

To improve the DI method \cite{C.Lin}, Chung {\it et al.} \cite{Y.Lee} first proved that Lin {\it et al.}’s block-distortion function is a convex function, and then proposed a GD (gradient-descent) chroma subsampling method to achieve better quality of the reconstructed Bayer CFA images.
The common weakness and limitation in DI and GD is that the COPY-based estimation way used to estimate the four $(U, V)$-pairs of $B^{UV}$ is too simple, limiting the quality performance of the reconstructed Bayer CFA images.


\subsubsection{Related work for $I^{DTDI}$}\label{i_dtdi}
The DTDI CFA pattern has been commonly used in industrial high-speed line scan cameras. As depicted in Fig. \ref{pic:DTDI}, there are two 2$\times$2 DTDI patterns, where 
each pixel contains one G value and one R value (or B value).
Fig. \ref{pic:DTDI}(a) illustrates the 2$\times$2 DTDI CFA pattern [($G_1$, $B_1$), ($G_2$, $R_2$), ($G_3$, $B_3$), ($G_4$, $R_4$)]. Our introduction and discussion focus on the DTDI CFA pattern in Fig. \ref{pic:DTDI}(a), but our introduction and discussion are also applicable to Fig. \ref{pic:DTDI}(b).

From Eq. (\ref{eq:YUV2RGB}), Chung et al. \cite{W.Yang} observed that the B color is dominated by the Y and U components, and the R color is dominated by the Y and V components.
Therefore, for $I^{DTDI}$, the subsampled U component and the V component of $B^{UV}$ are determined by performing 4:2:0(L) and 4:2:0(R) on $B^{UV}$, respectively, leading to better quality of the reconstructed DTDI CFA images relative to the traditional chroma subsampling methods.
For convenience, their method is called the CD (color domination) method for $I^{DTDI}$.

\begin{figure}
  \centering
    \subfigure[]{\includegraphics[height=2cm]{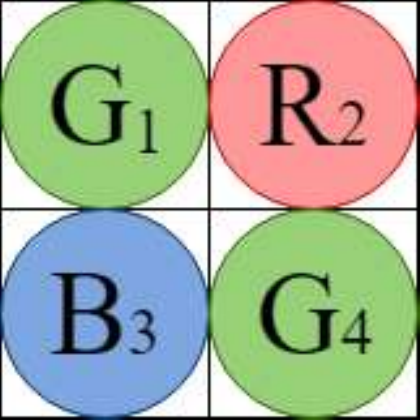}}
    \subfigure[]{\includegraphics[height=2cm]{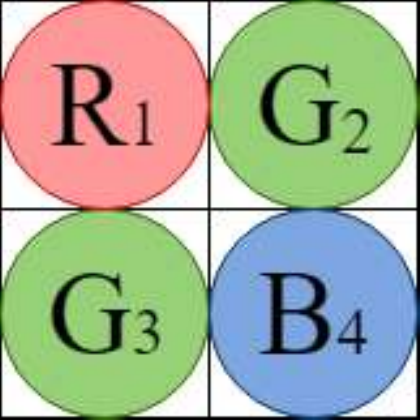}}
    \subfigure[]{\includegraphics[height=2cm]{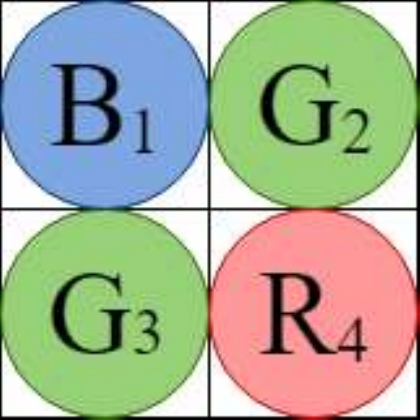}}
    \subfigure[]{\includegraphics[height=2cm]{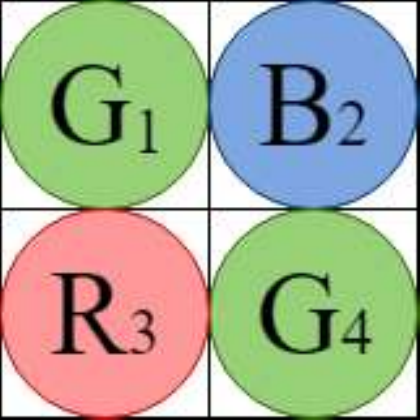}}
  \caption{Four Bayer CFA patterns. (a) [$G_1, R_2, B_3, G_4$]. (b) [$R_1, G_2, G_3, B_4$]. (c) [$B_1, G_2, G_3, R_4$]. (d) [$G_1, B_2, R_3, G_4$].}
  \label{pic:CFA}
\end{figure}

\begin{figure}
  \centering
    \subfigure[]{\includegraphics[height=2cm]{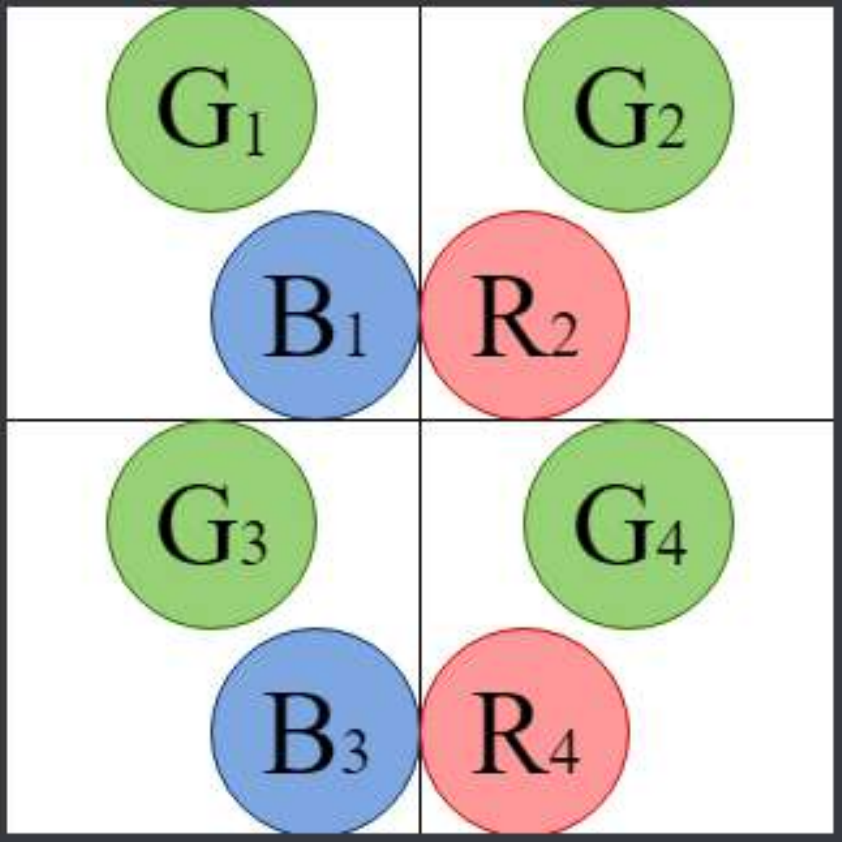}}
    \subfigure[]{\includegraphics[height=2cm]{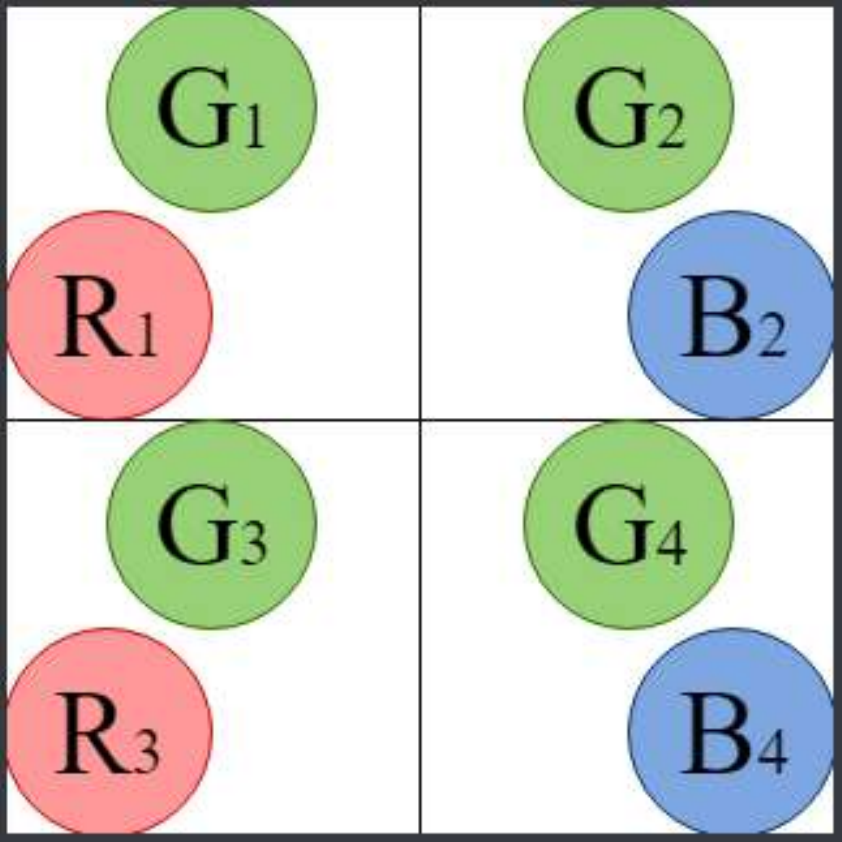}}
  \caption{Two DTDI CFA patterns. (a) [($G_1$, $B_1$), ($G_2$, $R_2$), ($G_3$, $B_3$), ($G_4$, $R_4$)]. (b) [($G_1$, $R_1$), ($G_2$, $B_2$), ($G_3$, $R_3$), ($G_4$, $B_4$)].}
  \label{pic:DTDI}
\end{figure}

\subsection{Motivation} \label{sec:IB}
From the related work introduction for $I^{RGB}$, $I^{Bayer}$, and $I^{DTDI}$, the first purpose of this paper is to propose an improved 2$\times$2 $t$ ($\in \{RGB, Bayer, DTDI\}$) color block-distortion model. The second purpose of this paper is to propose an improved chroma subsampling method for $I^{RGB}$, $I^{Bayer}$, and $I^{DTDI}$, achieving better quality and quality-bitrate tradeoff of the reconstructed RGB full-color, Bayer CFA, and DTDI CFA images relative to the existing traditional and state-of-the-art methods.

\subsection{Contributions} \label{sec:IC}
The three contributions of this paper are clarified as follows.

In the first contribution, we propose an improved 2$\times$2 $t$ ($\in \{RGB, Bayer, DTDI\}$) color block-distortion function at the server side. Furthermore, we prove that the proposed 2$\times$2 color block-distortion function is a convex function.

In the second contribution, for each 2$\times$2 UV block $B^{UV}$, we apply a convex function minimization technique to obtain the initially subsampled $(U, V)$-pair of $B^{UV}$.
Furthermore, based on the shape similarity between the convex function (corresponding to the proposed 2$\times$2 color block-distortion function) in real domain and the convex function in integer domain, we propose an iterative chroma subsampling method to better improve the subsampled $(U, V)$-pair of $B^{UV}$. 

In the third contribution, based on the Kodak dataset \cite{Kodak_dataset} and the IMAX dataset \cite{IMAX_dataset}, the comprehensive experimental results demonstrated that on the versatile video coding (VVC) platform VTM-8.0 \cite{VVC}, our improved chroma subsampling method achieves better quality, in terms of CPSNR (color peak signal-to-noise ratio), SSIM (structure similarity index) \cite{Z.Wang}, and FSIM (feature similarity index) \cite{FSIM}, and better quality-bitrate tradeoff relative to the five traditional methods in \textbf{CS}, IDID \cite{Y.Zhang}, and MCIM \cite{S.Wang} for $I^{RGB}$.
For $I^{Bayer}$, our improved method achieves better quality and quality-bitrate tradeoff relative to 4:2:0(A), the DI method \cite {C.Lin}, and the GD method \cite{Y.Lee}.
For $I^{DTDI}$, our improved method outperforms 4:2:0(A) and the CD method \cite{W.Yang}.

The rest of this paper is organized as follows. In Section \ref{sec:II}, our improved 2$\times$2 $t$ ($\in \{RGB, Bayer, DTDI\}$) color block-distortion model is presented, and then we prove that the proposed color block-distortion function is a convex function. In Section \ref{sec:III}, our improved chroma subsampling method is presented. In Section IV, the experimental results are demonstrated to justify the significant quality and quality-bitrate tradeoff merits of our improved method. In Section \ref{sec:VI}, some discussions and concluding remarks are addressed.

\section{OUR IMPROVED BLOCK-DISTORTION FUNCTION FOR 2$\times$2 RGB FULL-COLOR, BAYER CFA, AND DTDI CFA BLOCKS } \label{sec:II}
 In the first subsection, at the server side, we describe how to estimate the four $(U, V)$-pairs of each 2$\times$2 UV block $B^{UV}$, and then the estimated four $(U, V)$-pairs of $B^{U, V}$ are plugged in our improved 2$\times$2 $t$ ($\in \{RGB, Bayer, DTDI\}$) color block-distortion function.
 In the second subsection, our improved 2$\times$2 color block-distortion function is presented. In the third subsection, the convex property of our block-distortion function is proved.

\subsection{Estimating the Four $(U, V)$-pairs of Each 2$\times$2 UV Block } \label{sec:IIA}
\begin{figure}
  \centering
    \includegraphics[height = 5 cm]{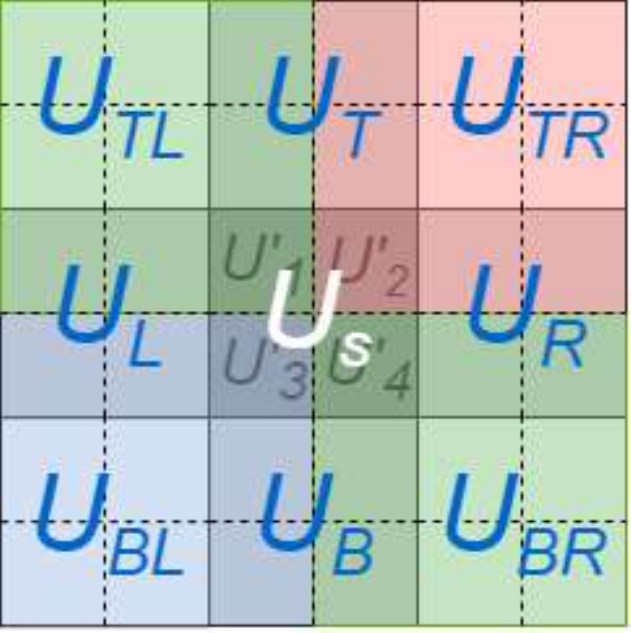}
  \caption{The notations used in the bilinear interpolation-based chroma upsampling at the server side.}
  \label{pic:3x3Bilinear}
\end{figure}
Let the input color image be denoted by $I^t$, where the symbol $``t "$ has been defined before.
For each 2$\times$2 UV block $B^{UV}$, let the subsampled $(U, V)$-pair of $B^{UV}$ be denoted by the parameter-pair $(U_s , V_s)$ which will be to be determined. 
Before defining our improved reconstructed 2$\times$2 color block-distortion function at the server side, we need to estimate the four $(U, V)$-pairs of $B^{UV}$, in which each estimated $(U, V)$-pair is expressed as a linear function with the parameter-pair $(U_s, V_s)$. 
To better estimate the four $(U, V)$-pairs of $B^{UV}$, besides the parameter-pair $(U_s, V_s)$, we also refer four known neighboring $(U, V )$-pairs, and four future neighboring $(U, V)$-pairs of $B^{UV}$.

For simplicity, we only describe how to estimate the four U entries of $B^{UV}$.
As depicted in Fig. \ref{pic:3x3Bilinear}, let the top-left, top-right, bottom-left, and bottom-right estimated U entries of $B^{UV}$ be denoted by $U'_1$, $U'_2$, $U'_3$, and $U'_4$, respectively; let the four known neighboring U entries be denoted by $U_{TL}$, $U_T$, $U_{TR}$, and $U_L$; let the four future neighboring U entries be denoted by $U_R$, $U_{BL}$, $U_B$, and $U_{BR}$.
Because our improved chroma subsampling method, which will be presented in Section \ref{sec:III}, operates in row-major order, the four known neighboring U entries have already obtained, while the four future neighboring U entries can be obtained by performing one traditional chroma subsampling method ($\in$ {\bf CS}), e.g. 4:2:0(A), on the four future 2$\times$2 UV blocks.

Applying the bilinear interpolation on the parameter-pair $(U_s , V_s)$ and the eight neighboring reference U entries, we have the following result.
\begin{prop}\label{pro1}

The four estimated U entries of $B^{U}$, namely $U'_1$, $U'_2$, $U'_3$, and $U'_4$, are expressed as

\end{prop}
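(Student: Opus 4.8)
The plan is to obtain the four estimates by a direct, coordinate-level application of two-dimensional bilinear interpolation on the coarse grid carried by the subsampled chroma values, using the geometry fixed in Fig.~\ref{pic:3x3Bilinear}. First I would set up full-resolution chroma coordinates in which the four pixels of the current $2\times2$ block $B^{U}$ occupy the cells $(0,0),(1,0),(0,1),(1,1)$ (column, row), so that the single subsampled value $U_s$ representing $B^{U}$ sits at the block centroid $\bigl(\tfrac12,\tfrac12\bigr)$. Under the same convention the centroids of the eight neighbouring $2\times2$ blocks — which carry $U_{TL},U_T,U_{TR},U_L,U_R,U_{BL},U_B,U_{BR}$ — form the lattice $\bigl(\tfrac12,\tfrac12\bigr)+2\mathbb{Z}^{2}$, each being at horizontal/vertical offset $2$ (and diagonal offset $2\sqrt2$) from the centroid of $B^{U}$.

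Next, for each target pixel of $B^U$ I would name the unique axis-aligned cell of that lattice that contains it. For the top-left pixel, at $(0,0)$, this cell has $U_{TL},U_T,U_L,U_s$ at its upper-left, upper-right, lower-left, and lower-right corners; measured from the $U_s$-corner the pixel has normalized local coordinates $\bigl(\tfrac14,\tfrac14\bigr)$, since its offset of $\tfrac12$ from the centroid of $B^U$ in each direction is one quarter of the inter-centroid spacing $2$. Inserting $u=v=\tfrac14$ into the bilinear weights $(1-u)(1-v),\,u(1-v),\,(1-u)v,\,uv$ attached to the $U_s$-, $U_L$-, $U_T$-, and $U_{TL}$-corners respectively gives $U'_1=\tfrac1{16}\bigl(9U_s+3U_L+3U_T+U_{TL}\bigr)$. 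The same computation for the remaining three pixels — with the triple $\{U_L,U_T,U_{TL}\}$ replaced respectively by $\{U_R,U_T,U_{TR}\}$, $\{U_L,U_B,U_{BL}\}$, and $\{U_R,U_B,U_{BR}\}$ — yields the stated $U'_2,U'_3,U'_4$, and the analogous derivation for the four $V$ entries is identical.

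Finally I would record the two structural facts that the rest of Section~\ref{sec:II} relies on: each $U'_k$ is an affine function of the unknown parameter $U_s$ with common slope $\tfrac9{16}$, the constant term being a fixed convex combination of three already-available reference entries; and, since the weight $\tfrac9{16}$ on $U_s$ dominates the other three weights $\tfrac3{16},\tfrac3{16},\tfrac1{16}$, which together with it sum to $1$, the estimate both reproduces a locally constant chroma field exactly and stays consistent with $U_s$ being the representative of $B^U$.

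The algebra here is trivial; the only delicate point — and the step I would treat as the substantive one — is pinning down the sampling-position convention, namely that the subsampled value lives at the block centroid on a coarse grid of spacing $2$, so that the four local coordinate pairs come out to $\bigl(\tfrac14,\tfrac14\bigr)$ and hence the interpolation weights to the ratio $9{:}3{:}3{:}1$; a different 4:2:0 co-siting rule would change the constants, so the argument must explicitly invoke the layout drawn in Fig.~\ref{pic:3x3Bilinear} rather than a generic appeal to ``bilinear interpolation''.
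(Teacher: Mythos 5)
Your proposal is correct and follows essentially the same route as the paper's own proof: a coordinate-level bilinear interpolation on the coarse subsampled grid (the paper places $U_L,U_{TL},U_T,U_s$ at $(0,0),(0,1),(1,1),(1,0)$ and evaluates at $(3/4,1/4)$, which is exactly your local offset $(\tfrac14,\tfrac14)$ from the $U_s$ corner), yielding the same $9{:}3{:}3{:}1$ weights for $U'_1$ and the symmetric assignments for $U'_2,U'_3,U'_4$.
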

\begin{equation}\label{eq:rec U}
  \begin{aligned}
    &U'_i=  \frac{9}{16}U_s + \bar{U}_i\\
  \end{aligned}
\end{equation}
for $1 \leq i \leq 4$ with 
\begin{equation}\label{eq:part U}
  \begin{aligned}
    &\bar{U}_{1} = \frac{1}{16}U_{TL}+\frac{3}{16}U_{T}+\frac{3}{16}U_{L}\\
    &\bar{U}_{2} = \frac{1}{16}U_{TR}+\frac{3}{16}U_{T}+\frac{3}{16}U_{R}\\
    &\bar{U}_{3} = \frac{1}{16}U_{BL}+\frac{3}{16}U_{B}+\frac{3}{16}U_{L}\\
    &\bar{U}_{4} = \frac{1}{16}U_{BR}+\frac{3}{16}U_{B}+\frac{3}{16}U_{R}\\
  \end{aligned}
\end{equation}

\begin{proof}\renewcommand{\qedsymbol}{}
We move the detailed proof to \ref{FirstAppendix} in order to make the paper more readable.
\end{proof}
Similar to the proof in Proposition \ref{pro1}, the four estimated V entries of $B^{UV}$ can be followed and they have the same expressions as that in Eqs. (\ref{eq:rec U})-(\ref{eq:part U}).

\subsection{Our Improved 2x2 Color Block-distortion Function for $I^{RGB}$, $I^{Bayer}$, and $I^{DTDI}$}\label{IIC}
As defined before, it is known that the notation ``$t$'' ($\in \{RGB, Bayer, DTDI\}$) denotes the considered image type.
For example, when $t$ = “$Bayer$”, $I^t$ (= $I^{Bayer}$) denotes the Bayer CFA image. For each 2$\times$2 $t$ color block in $I^t$, let $S^t_i$, $1\leq i \leq 4$, denote the color set in the $i$th pixel of the 2$\times$2 $t$ color block.
For example, when $t$ = “$RGB$”, we have ($S^{RGB}_1$, $S^{RGB}_2$, $S^{RGB}_3$, $S^{RGB}_4$) = (($R_1$, $G_1$, $B_1$), ($R_2$, $G_2$, $B_2$), ($R_3$, $G_3$, $B_3$), ($R_4$, $G_4$, $B_4$)) for each 2$\times$2 RGB full-color block $B^{RGB}$. When $t$ = ``$Bayer$'', as depicted in Fig. \ref{pic:CFA}(a), we have $(S^{Bayer}_1, S^{Bayer}_2, S^{Bayer}_3, S^{Bayer}_4) = (G_1, R_2, B_3, G_4)$ for each 2$\times$2 Bayer CFA block $B^{Bayer}$. When $t$ = ``$DTDI$'', as depicted in Fig. \ref{pic:DTDI}(a), we have ($S^{DTDI}_1$, $S^{DTDI}_2$, $S^{DTDI}_3$, $S^{DTDI}_4$) = (($G_1$, $B_1$), ($G_2$, $R_2$), ($G_3$, $B_3$), ($G_4$, $R_4$)) for each 2$\times$2 DTDI CFA block $B^{DTDI}$.

By Proposition \ref{pro1}, the four estimated $(U, V)$-pairs of $B^{UV}$, namely $(U'_i, V'_i)$ for $1\leq i\leq 4$, can be obtained. 
Furthermore, for $1\leq i\leq 4$, in Eq. (\ref{eq:YUV2RGB}), we replace $U_i$ and $V_i$ by $U'_i$ and $V'_i$, respectively, and then the four estimated $t$ color values, $t \in \{RGB, Bayer, DTDI\}$, can be obtained at the server side.
For convenience, let the estimated 2$\times$2 $t$ color block be denoted by $B^{est,t}$ and the 2$\times$2 ground truth $t$ color block be denoted by $B^{t}$ which comes from $I^{t}$.
In what follows, as the 2$\times$2 $t$ color block-distortion, we explain how to express the sum of squared errors between $B^{est,t}$ and $B^t$ in more detail.

We first consider $t$ = “$RGB"$, and then the 2$\times$2 RGB full-color block-distortion between $B^{est,RGB}$ and $B^{RGB}$ is expressed as $\sum_{i=1}^4 [(R_i - R'_i)^2 + (G_i - G'_i)^2 + (B_i - B'_i)^2]$.
For $t$ = “$Bayer$", the 2$\times$2 Bayer CFA block-distortion is expressed as $D^{Bayer}$ = $[(G_1 - G'_1)^2 + (R_2 - R'_2)^2 + (B_3 - B'_3)^2 + (G_4 - G'_4)^2]$. 
Similarly, when $t$ = “$DTDI$", the 2$\times$2 DTDI CFA block-distortion is expressed as 
$D^{DTDI} = [(G_1 - G'_1)^2 + (B_1 - B'_1)^2 + (G_2 - G'_2)^2 + (R_2 - R'_2)^2 + (G_3 - G'_3)^2 + (B_3 - B'_3)^2 + (G_4 - G'_4)^2 + (R_4 - R'_4)^2]$.

For easy exposition, we consider $t$ = “$Bayer$”. By Eq. (\ref{eq:RGB2YUV}), the 2$\times$2 Bayer CFA block-distortion function $D^{Bayer}(U_s, V_s)$ can be expressed as a function of the two parameters $U_s$ and $V_s$ by the following derivation:
\begin{equation}\label{eq:General BY}
\begin{small}
\begin{aligned}
    &D^{Bayer}(U_s, V_s) \\
    &=[(G_1 - G'_1)^2 + (R_2 - R'_2)^2 + (B_3 - B'_3)^2 + (G_4 - G'_4)^2] \\
    &=[(- 0.391(U_1 - U'_1)- 0.813(V_1 - V'_1))^2 \\
    &+ (1.596(V_2 - V'_2))^2 + (2.018(U_3 - U'_3))^2\\
    &+ (- 0.391(U_4 - U'_4)- 0.813(V_4 - V'_4))^2] \\
    &=\sum\limits_{i=1}\limits^{4}\sum\limits_{c \in S^{Bayer}_i}[a_c(U'_i-U_i)+b_c(V'_i-V_i)]^2\\
    &=\sum\limits_{i=1}\limits^{4}\sum\limits_{c \in S^{Bayer}_i}[a_c(\frac{9}{16}U_s+\bar{U}_i-U_i)+b_c(\frac{9}{16}V_s+\bar{V}_i-V_i)]^2
\end{aligned}
\end{small}
\end{equation}
where the values of $a_c$ and $b_c$ are defined by
\begin{equation}
  \label{eq:ab}
  \begin{aligned}
  &a_c=
  \begin{cases}
      0       &\text{for } c = R_i\\
      -0.391  &\text{for } c = G_i\\
      2.018   &\text{for } c = B_i\\
  \end{cases}\\
  &\\
  &b_c=
  \begin{cases}
      1.596   &\text{for } c = R_i\\
      -0.813  &\text{for } c = G_i\\
      0       &\text{for } c = B_i\\
  \end{cases}
\end{aligned}
\end{equation}
In the same way, in terms of $U_s$ and $V_s$, the 2$\times$2 RGB full-color and DTDI CFA block-distortion functions, $D^{RGB}(U_s, V_s)$ and $D^{DTDI}(U_s, V_s)$, can be followed.
In general, our improved 2$\times$2 $t$ ($\in \{RGB, Bayer, DTDI\}$) color block-distortion is expressed as
\begin{equation}\label{eq:General BD}
  \begin{aligned}
     &D^{t}(U_s, V_s) 
     =\sum\limits_{i=1}\limits^{4}\sum\limits_{c \in S^{t}_i}[a_c(U'_i-U_i)+b_c(V'_i-V_i)]^2\\
      &=\sum\limits_{i=1}\limits^{4}\sum\limits_{c \in S^{t}_i}[a_c(\frac{9}{16}U_s+\bar{U}_i-U_i)+b_c(\frac{9}{16}V_s+\bar{V}_i-V_i)]^2
  \end{aligned}
\end{equation}

\subsection{Convex Property Proof for the 2$\times$2 $t$ ($\in \{RGB, Bayer, DTDI\}$) Color Block-distortion Function} \label{sec:IIB}

We now prove that our improved 2$\times$2 $t$ ($\in \{RGB, Bayer, DTDI\}$) color block-distortion function in Eq. (\ref{eq:General BD}) is a convex function, and we have the following result.

\begin{prop}\label{thm:convex}
Our improved 2$\times$2 $t$ ($\in \{RGB, Bayer, DTDI\}$) color block-distortion function in Eq. (\ref{eq:General BD}) is a convex function. 
\end{prop}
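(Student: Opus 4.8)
The plan is to exhibit $D^{t}(U_s,V_s)$ as a quadratic polynomial in the two scalar unknowns $(U_s,V_s)$ and then invoke the standard second-order criterion: a twice-differentiable function on $\mathbb{R}^2$ is convex iff its Hessian is positive semidefinite everywhere. First I would substitute $U'_i=\tfrac{9}{16}U_s+\bar{U}_i$ and $V'_i=\tfrac{9}{16}V_s+\bar{V}_i$ from Proposition~\ref{pro1} into Eq.~(\ref{eq:General BD}), so that each summand takes the form $\bigl(\alpha_{i,c}U_s+\beta_{i,c}V_s+\gamma_{i,c}\bigr)^2$ with $\alpha_{i,c}=\tfrac{9}{16}a_c$, $\beta_{i,c}=\tfrac{9}{16}b_c$, and $\gamma_{i,c}=a_c(\bar{U}_i-U_i)+b_c(\bar{V}_i-V_i)$ a constant independent of $(U_s,V_s)$. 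This makes clear that $D^{t}$ is genuinely quadratic, with no cross-dependence on higher powers.

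Next I would compute the Hessian, which is the constant symmetric matrix
\[
H = 2\Bigl(\tfrac{9}{16}\Bigr)^2\begin{bmatrix} A & C \\ C & B \end{bmatrix},\qquad
A=\sum_{i=1}^{4}\sum_{c\in S^{t}_i}a_c^2,\quad
B=\sum_{i=1}^{4}\sum_{c\in S^{t}_i}b_c^2,\quad
C=\sum_{i=1}^{4}\sum_{c\in S^{t}_i}a_cb_c .
\]
To conclude, I would check $H\succeq 0$ by the two-pronged test for a $2\times2$ symmetric matrix: the $(1,1)$ entry satisfies $A\ge 0$ trivially, and $\det H = 4\bigl(\tfrac{9}{16}\bigr)^4\,(AB-C^2)\ge 0$ follows from the Cauchy–Schwarz inequality applied to the two vectors $(a_c)$ and $(b_c)$ indexed by the pairs $(i,c)$ with $i\in\{1,\dots,4\}$ and $c\in S^{t}_i$. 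A nonnegative trace together with a nonnegative determinant forces both eigenvalues of $H$ to be nonnegative, hence $H\succeq 0$ and $D^{t}$ is convex. Because $A$, $B$, $C$ depend only on which coefficients $a_c,b_c$ occur in the color sets $S^{t}_i$, the same computation covers $t=RGB$, $t=Bayer$, and $t=DTDI$ in one stroke.

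I should note there is an even shorter route: each summand is the square of an affine function of $(U_s,V_s)$ and is therefore convex, and a finite sum of convex functions is convex — done. I would nonetheless prefer the Hessian version here, since the explicit quadratic form $H$ (and the associated gradient) is exactly what gets reused in Section~\ref{sec:III} to derive the closed form for the initial subsampled pair and to run the gradient-descent iterations. There is no serious obstacle in the argument; the only point that deserves a sentence is the degenerate case where $(a_c)$ and $(b_c)$ are linearly dependent over the index set (for instance all $a_c=0$ for a channel that contributes no $U$-dependence), in which $\det H=0$ and $D^{t}$ is convex but not strictly convex — still fully consistent with the stated claim.
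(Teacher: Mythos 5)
Your proof is correct, and it reaches the conclusion by a genuinely different route at the decisive step. Both you and the paper view $D^{t}$ as a quadratic in $(U_s,V_s)$ and examine its constant Hessian (your $H=2(\tfrac{9}{16})^2\bigl[\begin{smallmatrix}A & C\\ C & B\end{smallmatrix}\bigr]$ matches the paper's Eq.~(\ref{eq:General H2}) with $w=\tfrac{9}{16}$), but the paper then substitutes the numerical coefficients of Eq.~(\ref{eq:ab}) and verifies $\det H>0$ separately for each of the three cases ($6.6216$ for Bayer, $26.4863$ for DTDI, $86.2040$ for RGB), whereas you dispense with all numerics by applying Cauchy--Schwarz to the vectors $(a_c)$ and $(b_c)$ to get $AB-C^2\ge 0$, covering every $t$ (indeed every conceivable color-set configuration) in one stroke; your remark that each summand is the square of an affine function, so $D^t$ is a sum of convex functions, is an even shorter valid proof of the stated Proposition. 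What the paper's computation buys that your argument, as written, does not, is \emph{strict} positivity of $\det H$: the closed form in Eq.~(\ref{eq:General solutions}) divides by $\frac{9}{16}[(\sum a_cb_c)^2-(\sum a_c^2)(\sum b_c^2)]$, so positive definiteness (a unique critical point) is actually needed downstream, not just convexity. You flag the degenerate case $\det H=0$ yourself; to fully recover the paper's conclusion it suffices to add one sentence observing that for each $t\in\{RGB,Bayer,DTDI\}$ the vectors $(a_c)$ and $(b_c)$ are not proportional (e.g.\ the B channel contributes $(a,b)=(2.018,0)$ and the R channel $(0,1.596)$), so Cauchy--Schwarz is strict, $\det H>0$, and $D^t$ is strictly convex — matching the paper's numerical verification without any arithmetic.
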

\begin{proof}\renewcommand{\qedsymbol}{}
To make the paper more readable, we move the detailed proof to \ref{proof}.
\end{proof}


\section{OUR IMPROVED CHROMA SUBSAMPLING METHOD FOR RGB FULL-COLOR, BAYER CFA, AND DTDI CFA IMAGES} \label{sec:III}

 In the first subsection, we apply the differentiation technique on our improved $t$ ($\in \{RGB, Bayer, DTDI\}$) color block-distortion function, which is a convex function as proved in Proposition \ref{thm:convex}, to obtain the closed form which will be used as the initially subsampled $(U, V)$-pair of each 2$\times$2 UV block $B^{UV}$.
 In the second subsection, according to the shape similarity between the convex function in real domain and that in integer domain, we propose an iterative procedure to better improve the initially subsampled $(U, V)$-pair of $B^{UV}$ in the integer domain [0, 255] $\times$ [0, 255].
 
\subsection{Deriving the Closed Form as the the Initially Subsampled $(U, V)$-pair of $B^{UV}$} \label{sec:IIIA}

By taking the first derivative on Eq. (\ref{eq:General BD}) with respect to $U_s$ and $V_s$, respectively, and then setting the two derivatives to zero, it yields
\begin{equation}
  \label{eq:General derivatives}
  \begin{aligned}
    \frac{\partial D^t(U_s, V_s)}{\partial {V_s}} = 0\\
    \frac{\partial D^t(U_s, V_s)}{\partial {U_s}} = 0.
  \end{aligned}
\end{equation}
After solving the two equations in Eq. (\ref{eq:General derivatives}), the solution is denoted by $(U^{(0),t}_{s}, V^{(0),t}_{s})$ which is expressed as the closed form in Eq. (\ref{eq:General solutions}), and the solution is exactly the critical point of our improved $t$ ($\in \{RGB, Bayer, DTDI\}$) color block-distortion function in Eq. (\ref{eq:General BD}).
Naturally, the closed form in Eq. (\ref{eq:General solutions}) can be used as the initial chroma subsampling solution of $B^{UV}$ since it can minimize our improved $t$ color block-distortion function in real domain.

\begin{figure*}[ht]
  \begin{equation}
    \begin{aligned}\\
      \label{eq:General solutions}
      &U^{(0),t}_{s} = \frac{(\sum\limits_{i=1}\limits^{4}\sum\limits_{c \in S^t_i}b_{c}^{2})
      \cdot[\sum\limits_{i=1}\limits^{4}\sum\limits_{c \in S^t_i}a_c^2(\bar{U}_i-U_i)+a_cb_c(\bar{V}_i-V_i)]
      -(\sum\limits_{i=1}\limits^{4}\sum\limits_{c \in S^t_i}a_cb_c)
      \cdot[\sum\limits_{i=1}\limits^{4}\sum\limits_{c \in S^t_i}b_c^2(\bar{V}_i-V_i)+a_cb_c(\bar{U}_i-U_i)]}
      {\frac{9}{16}[(\sum\limits_{i=1}\limits^{4}\sum\limits_{c \in S^t_i}a_cb_c)^2
      -(\sum\limits_{i=1}\limits^{4}\sum\limits_{c \in S^t_i}a_c^2)\cdot (\sum\limits_{i=1}\limits^{4}\sum\limits_{c \in S^t_i}b_c^2)]}\\
      &V^{(0),t}_{s} = \frac{(\sum\limits_{i=1}\limits^{4}\sum\limits_{c \in S^t_i}a_c^2)
      \cdot[\sum\limits_{i=1}\limits^{4}\sum\limits_{c \in S^t_i}b_c^2(\bar{V}_i-V_i)+a_cb_c(\bar{U}_i-U_i)]
      -(\sum\limits_{i=1}\limits^{4}\sum\limits_{c \in S^t_i}a_cb_c)
      \cdot[\sum\limits_{i=1}\limits^{4}\sum\limits_{c \in S^t_i}a_c^2(\bar{U}_i-U_i)+a_cb_c(\bar{V}_i-V_i)]}
      {\frac{9}{16}[(\sum\limits_{i=1}\limits^{4}\sum\limits_{c \in S^t_i}a_cb_c)^2
      -(\sum\limits_{i=1}\limits^{4}\sum\limits_{c \in S^t_i}a_c^2)\cdot (\sum\limits_{i=1}\limits^{4}\sum\limits_{c \in S^t_i}b_c^2)]}\\
  \end{aligned}
  \end{equation}
\end{figure*}

\subsection{The Proposed Iterative Chroma Subsampling Method for $t$ ($\in \{RGB, Bayer, DTDI\}$) Color Images}\label{IIIB}
Before presenting our iterative chroma subsampling method for $I^{RGB}$, $I^{Bayer}$, and $I^{DTDI}$, we analyze the shape similarity between the convex block-distortion function in real domain and that in integer domain in which the subsampled $(U, V)$-pair is practically considered in the integer domain [0, 255] $\times$ [0, 255].

\subsubsection{The Shape Similarity between the Convex Block-distortion Function in Real Domain and That in Integer Domain}\label{IIIB1}
We first take a real 2$\times$2 RGB full-color block $B^{RGB}$, as shown in Fig. \ref{pic:RGB convex example}(a), to describe the shape similarity between the convex RGB full-color block-distortion function (see Eq. (\ref{eq:General BD})) in real domain and that in the integer domain [0, 255] $\times$ [0, 255].
Then, we explain why the initially subsampled $(U, V)$-pair of the converted UV block $B^{UV}$, which has been obtained by Eq. (\ref{eq:General solutions}), has room to be better improved. 

By Eq. (\ref{eq:RGB2YUV}), we transform $B^{RGB}$, as shown in Fig. \ref{pic:RGB convex example}(a), to a 2$\times$2 YUV block $B^{YUV}$ which is shown in Fig. \ref{pic:RGB convex example}(b).
The eight neighboring $(U, V)$-pairs referred by the current 2$\times$2 UV block $B^{UV}$ are shown in Fig. \ref{pic:RGB convex example}(c).
In the real domain, Fig. \ref{pic:RGB convex example}(d) depicts the convex shape of the plot for the 2$\times$2 block-distortion function of Fig. \ref{pic:RGB convex example}(a).

\begin{table*}[]
\centering
\caption{THE CPSNR, SSIM, FSIM, AND TIME COMPARISON AMONG THE CONSIDERED METHODS FOR $I^{RGB}$.}
\label{Tab:RGB Table}
\begin{tabular}{|c|c|c|c|c|c|c|c|}
\hline
$I^{RGB}$                        & 4:2:0(A)   & 4:2:0(L)   & 4:2:0(R)        & 4:2:0(DIRECT)   & 4:2:0(MPEG-B) & IDID \cite{Y.Zhang} & \textbf{Our method}                \\ \hline
\multirow{2}{*}{CPSNR (dB)}      & 41.8772    & 42.6655    & 41.8246         & 43.0554         & 42.6817       & 42.9531             & \multirow{2}{*}{\textbf{43.8573}}  \\ \cline{2-7}
                                 & [43.1160]  & [43.2487]  & [42.6493]       & [42.9399]       & [42.9230]     & [42.6836]           &                                    \\ \hline
\multirow{2}{*}{CPSNR Gain (dB)} & 1.9800     & 1.1918     & 2.0327          & 0.8019          & 1.1756        & 0.9042              & \multirow{2}{*}{\textbf{}}         \\ \cline{2-7}
                                 & [0.7413]   & [0.6085]   & [1.2080]        & [0.9174]        & [0.9343]      & [1.1737]            &                                    \\ \hline
\multirow{2}{*}{SSIM}            & 0.9793     & 0.9812     & 0.9803          & 0.9817          & 0.9803        & 0.9815              & \multirow{2}{*}{\textbf{0.9854}}   \\ \cline{2-7}
                                 & [0.9831]   & [0.9830]   & [0.9824]        & [0.9820]        & [0.9814]      & [0.9815]            &                                    \\ \hline
\multirow{2}{*}{SSIM Gain}       & 0.0061     & 0.0043     & 0.0051          & 0.0037          & 0.0051        & 0.0040              & \multirow{2}{*}{\textbf{}}         \\ \cline{2-7}
                                 & [0.0024]   & [0.0025]   & [0.0031]        & [0.0035]        & [0.0040]      & [0.0040]            &                                    \\ \hline
\multirow{2}{*}{FSIM}            & 0.999742   & 0.999745   & 0.999727        & 0.999724        & 0.999723      & 0.999727            & \multirow{2}{*}{\textbf{0.999857}} \\ \cline{2-7}
                                 & [0.999839] & [0.999790] & [0.999780]      & [0.999728]      & [0.999743]    & [0.999705]          &                                    \\ \hline
\multirow{2}{*}{FSIM Gain}       & 0.000115   & 0.000112   & 0.000130        & 0.000133        & 0.000135      & 0.000130            & \multirow{2}{*}{\textbf{}}         \\ \cline{2-7}
                                 & [0.000018] & [0.000067] & [0.000077]      & [0.000129]      & [0.000114]    & [0.000152]          &                                    \\ \hline
Time (s)                         & 0.0015     & 0.0144     & \textbf{0.0008} & \textbf{0.0008} & 0.0027        & 6.1887              & 0.0355                             \\ \hline
\end{tabular}
\end{table*}

\begin{table}[]
\caption{THE CPSNR, SSIM, FSIM, AND TIME COMPARISON BETWEEN THE MCIM METHOD \cite{S.Wang} AND OUR METHOD FOR $I^{RGB}$.}
\centering
\label{Tab:Kw Table}
\begin{tabular}{|c|c|c|}
\hline
$I^{RGB}$                        & MCIM \cite{S.Wang}        & \textbf{Our method}                \\ \hline
\multirow{2}{*}{CPSNR (dB)}      & 43.2527     & \multirow{2}{*}{\textbf{43.8573}}  \\ \cline{2-2}
                                 & [43.5009]   &                                    \\ \hline
\multirow{2}{*}{CPSNR Gain (dB)} & 0.6046      & \multirow{2}{*}{\textbf{}}         \\ \cline{2-2}
                                 & [0.3564]    &                                    \\ \hline
\multirow{2}{*}{SSIM}            & 0.9821      & \multirow{2}{*}{\textbf{0.9854}}   \\ \cline{2-2}
                                 & [0.9832]    &                                    \\ \hline
\multirow{2}{*}{SSIM Gain}       & 0.0034      & \multirow{2}{*}{\textbf{}}         \\ \cline{2-2}
                                 & [0.0023]    &                                    \\ \hline
\multirow{2}{*}{FSIM}            & 0.999751    & \multirow{2}{*}{\textbf{0.999857}} \\ \cline{2-2}
                                 & [0.999763 ] &                                    \\ \hline
\multirow{2}{*}{FSIM Gain}       & 0.000106    & \multirow{2}{*}{\textbf{}}         \\ \cline{2-2}
                                 & [0.000094]  &                                    \\ \hline
Time (s)                         & 0.0787      & \textbf{0.0355}                    \\ \hline
\end{tabular}
\end{table}

\begin{table*}[]
\centering
\caption{THE PSNR, SSIM, FSIM, AND TIME COMPARISON AMONG THE CONSIDERED METHODS FOR $I^{Bayer}$.}
\label{Tab:CFA Table}
\scalebox{1}{
\begin{tabular}{|c|c|c|c|c|}
\hline
$I^{Bayer}$ & 4:2:0(A)-BILI & DI-COPY \cite{C.Lin} & GD-BILI \cite{Y.Lee} & Our combination  \\ \hline
PSNR(dB)    & 42.0728       & 45.4834                              & 47.1694                 & \textbf{48.4807} \\ \hline
PSNR Gain   & 6.4080        & 2.9973                               & 1.3114                  & \textbf{}        \\ \hline
SSIM(dB)    & 0.9961        & 0.9981                               & 0.9987                  & \textbf{0.9989}  \\ \hline
SSIM Gain   & 0.0028        & 0.0009                               & 0.0003                  & \textbf{}        \\ \hline
FSIM(dB)    & 0.99851       & 0.99833                              & 0.99922                 & \textbf{0.99930} \\ \hline
FSIM Gain   & 0.00079       & 0.00097                              & 0.00007                 & \textbf{}        \\ \hline
Time (s)     & 0.0013        & \textbf{0.0012}                      & 0.0255                  & 0.0214           \\ \hline
\end{tabular}}
\end{table*}

\begin{figure}
  \centering
    \subfigure[]{
    \includegraphics[height=3 cm]{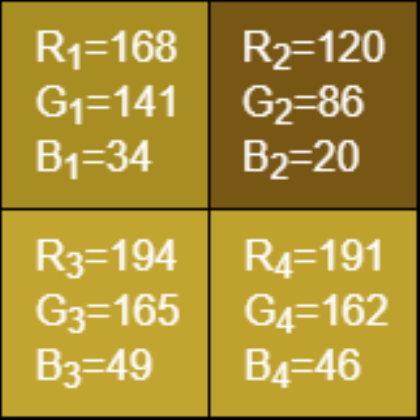}}
      \hspace{0.4 in}
    \subfigure[]{
    \includegraphics[height=3 cm]{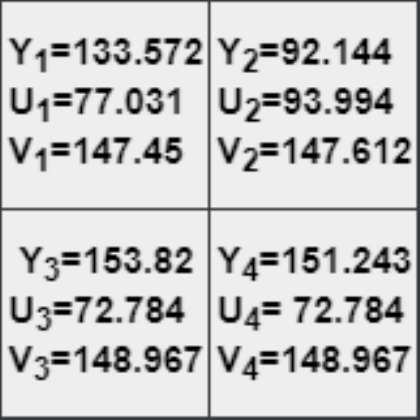}}\\
    \subfigure[]{
    \includegraphics[height=3.5 cm]{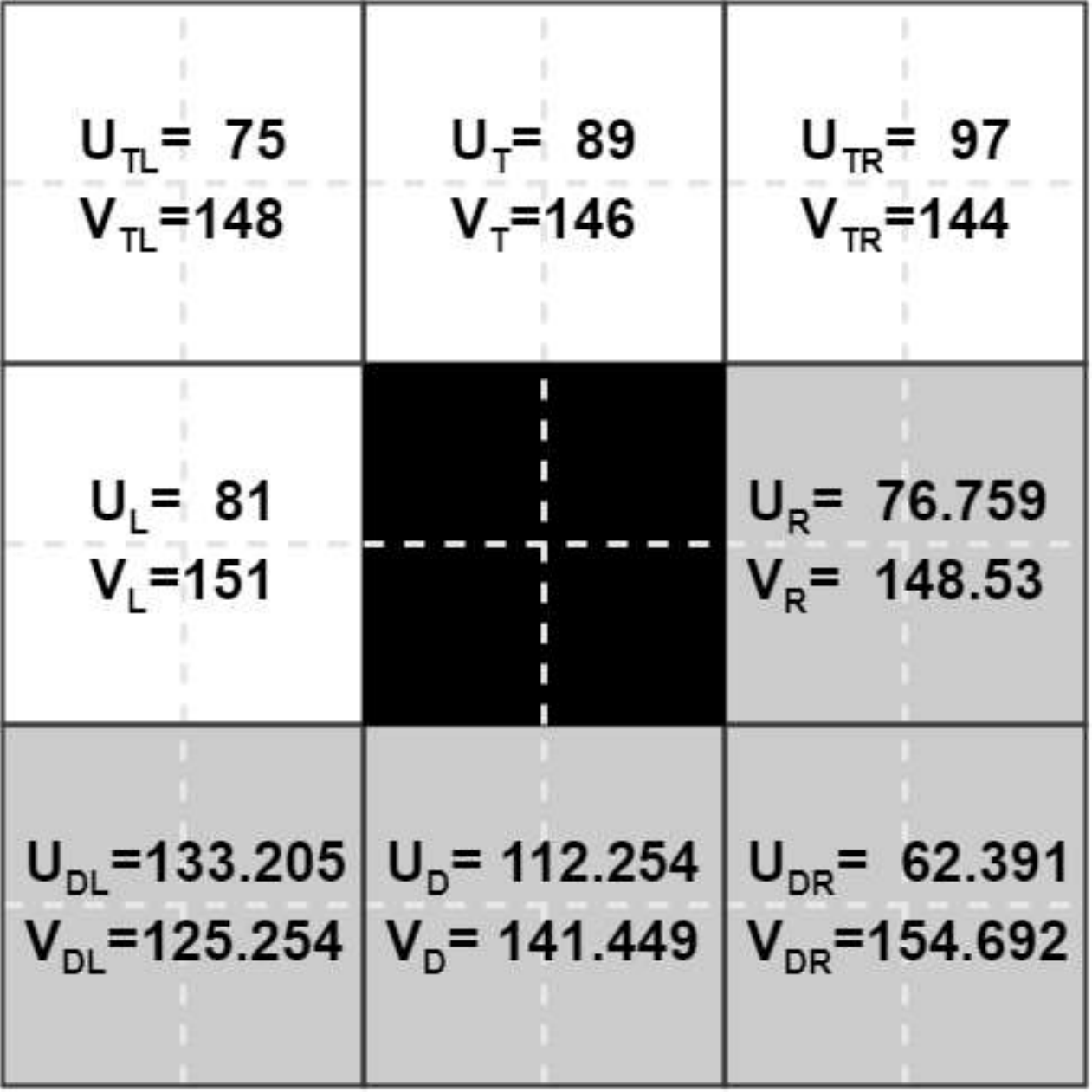}}
    \subfigure[]{
    \includegraphics[height=3.5 cm]{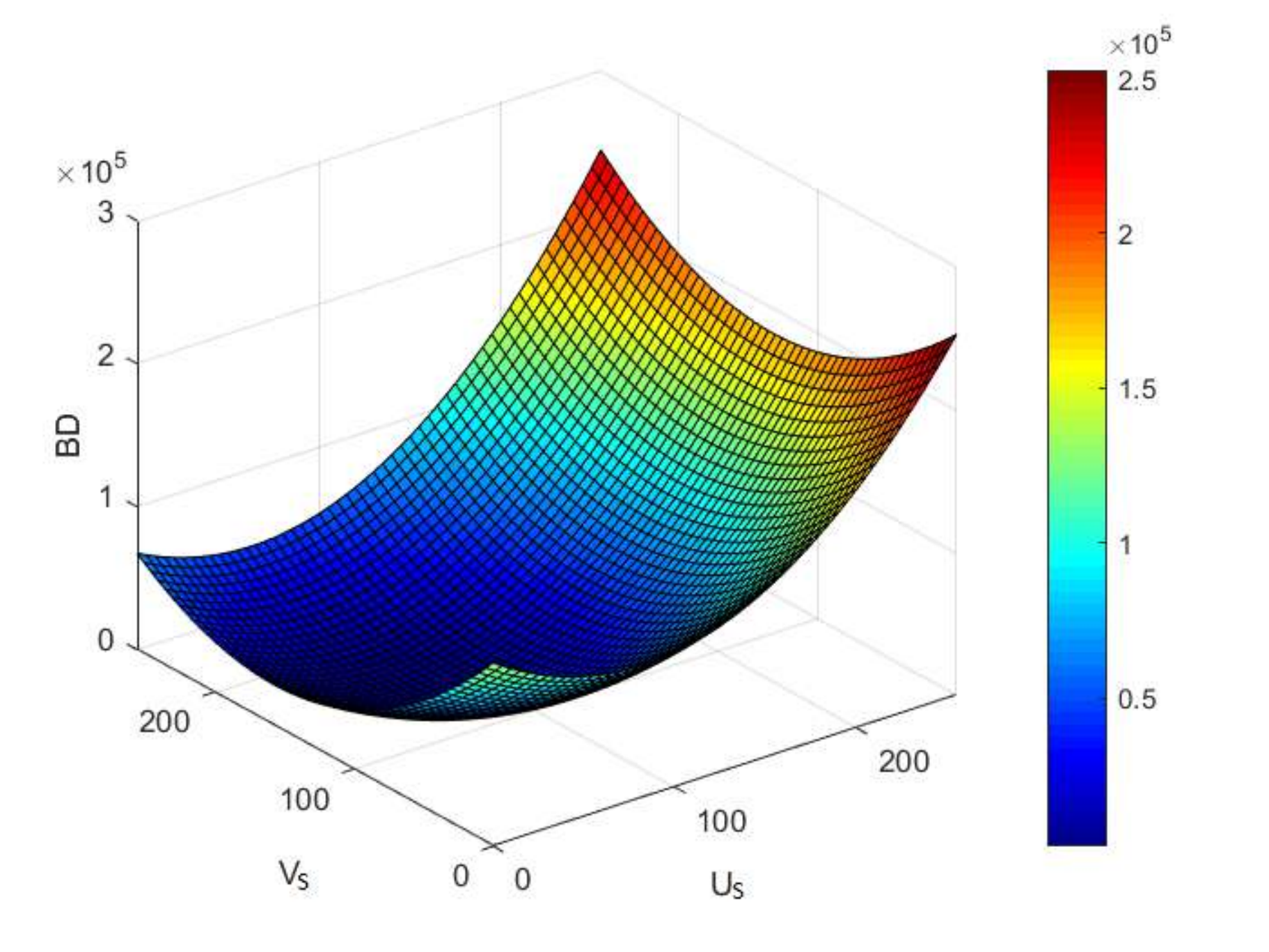}}
  \caption{One 2$\times$2 RGB full-color block example and the plot of its convex block-distortion function in the real domain. (a) The input 2$\times$2 RGB full-color block $B^{RGB}$. (b) The converted YUV block $B^{YUV}$. (c) The eight neighboring subsampled $(U, V)$-pairs referred by the current 2$\times$2 UV block $B^{UV}$. (d) The convex shape of the plot for the 2$\times$2 block-distortion function of Fig. \ref{pic:RGB convex example}(a).}
  \label{pic:RGB convex example}
\end{figure}
We now discuss the shape similarity between the convex block-distortion function in real domain and that in the integer domain [0, 255] $\times$ [0, 255].
Under the integer domain [0, 255] $\times$ [0, 255], the discretely convex-like grid plot of Fig. \ref{pic:RGB convex example}(d) is depicted in Fig. \ref{pic:GD example}.
In Fig. \ref{pic:GD example}, the room to better improve the initial chroma subsampling solution $(U_s^{(0),RGB}, V_s^{(0),RGB})$ in which the superscript ``0'' denotes the initial step, is indicated by the path from the red point $(U_s^{(0),RGB}, V_s^{(0),RGB})$ to the yellow point $(U_s^{(k),RGB}, V_s^{(k),RGB})$ in which the superscript ``$k$'' denotes the $k$th step of our iterative chroma subsampling method which will be presented in Subsection \ref{IIIB2}.
Our iterative chroma subsampling method can result in less 2$\times$2 RGB full-color block-distortion value, and can achieve better reconstructed RGB full-color images.

In fact, besides the 2$\times$2 RGB full-color block $B^{RGB}$, the shape similarity between the convex block-distortion function in real domain and that in integer domain [0, 255] $\times$ [0, 255] and the room to better improve the initial chroma subsampling solution are also feasible to the 2$\times$2 Bayer CFA block $B^{Bayer}$ and the 2$\times$2 DTDI CFA block $B^{DTDI}$. 
\begin{figure}[h]
  \centering
    \subfigure[]{ \includegraphics[height=6 cm]{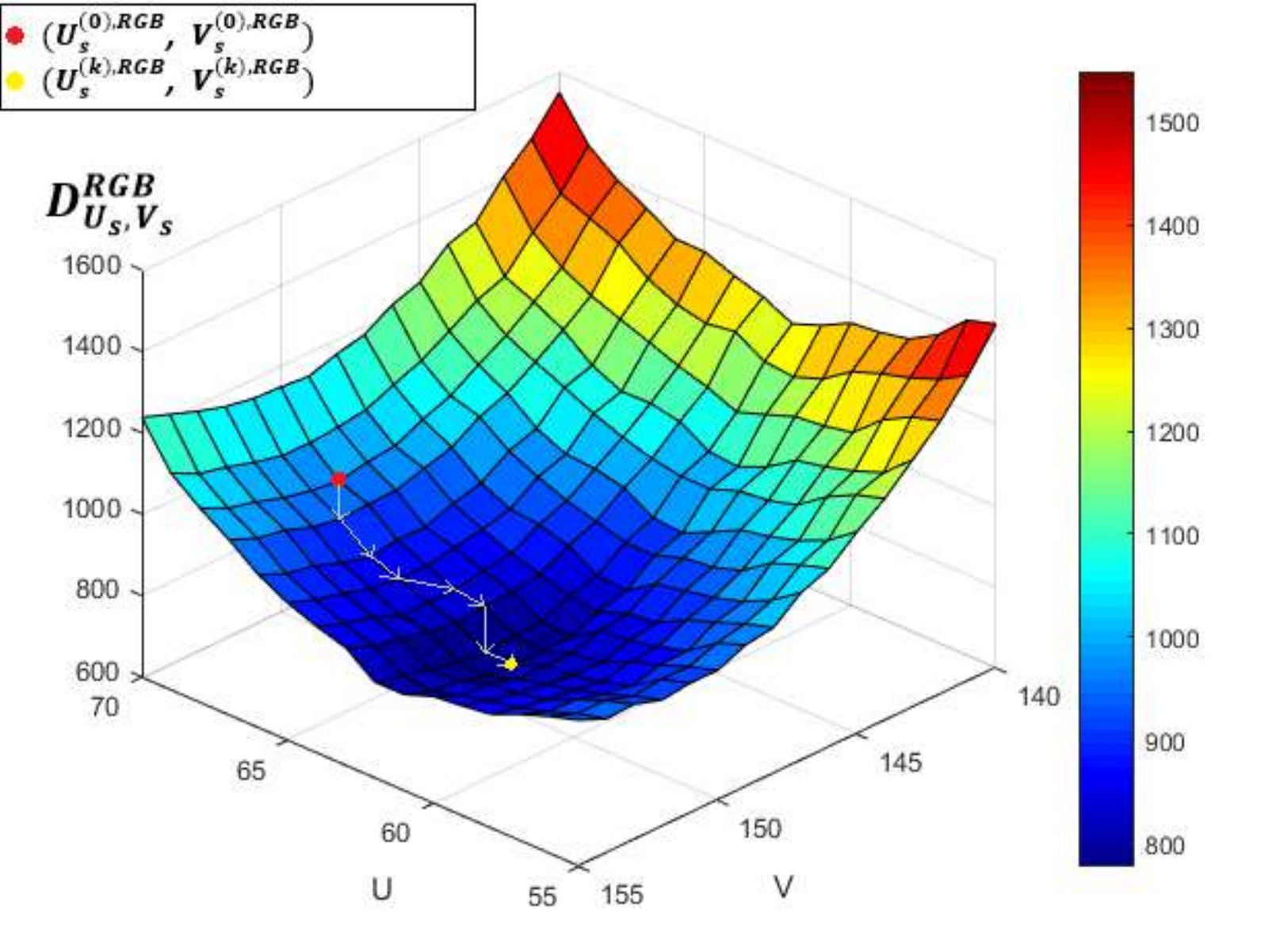}}
  \caption{The convex-like grid plot of Fig. \ref{pic:RGB convex example}(d) under the integer domain [0, 255] $\times$ [0, 255] and the room to better improve the initial chroma subsampling solution.}
  \label{pic:GD example}
\end{figure}

\subsubsection{Our Iterative Chroma Subsampling Procedure}\label{IIIB2}

As shown in Fig. \ref{pic:GD example}, from the room to better improve the initial chroma subsampling solution (see the red point in Fig. \ref{pic:GD example}), we now present our iterative chroma subsampling procedure to reduce the 2$\times$2 $t$ ($\in \{RGB, Bayer, DTDI\}$) color block-distortion value for each 2$\times$2 $t$ color block $B^t$,
resulting in better quality of the reconstructed $t$ color images.
The numerical technique used in our iterative chroma subsampling procedure for the three kinds of color images is quite similar to the gradient-descent technique.
For the input $t$ ($\in \{RGB, Bayer, DTDI\}$) color block $B^t$, taking $(U_s^{(0),t}, V_s^{(0),t})$ as the initial chroma subsampling solution, our iterative chroma subsampling procedure is shown below.

\begin{algorithm}[h]
  \small

  \caption{Our Improved Iterative Chroma Subsampling Method}
 
  \label{algo:IJCSU}
  \KwIn{2$\times$2 $t$ color block $B^t$.}
  \KwOut{Subsampled $(U, V)$-pair of $B^{UV}$, $(U_s^{(k),t}, V_s^{(k),t})$.}
  {\bf Step 1:} By Eq. (\ref{eq:General solutions}), we take $(U_s^{(0),t}$, $V_s^{(0),t})$ as the initial chroma subsampling solution of the current 2$\times$2 UV block $B^{UV}$; by Eq. (\ref{eq:General BD}), we calculate the $t$ color block-distortion, denoted by $D^t(U_s^{(0),t}, V_s^{(0),t})$. Set $k = 0$.\\
  {\bf Step 2:} Under the integer domain [0, 255] $\times$ [0, 255], we calculate all the eight neighboring $t$ color block-distortion values of $(U_s^{(k),t}, V_s^{(k),t})$, namely $D^t(U_s^{(k),t} + m, V_s^{(k),t} + n)$ for $(m, n)$ $\in \{(0, 1), (0, -1), (1, 0), (-1, 0), (1, 1), (1, -1), (-1, 1),$ $(-1, -1)\}$.
  Among the eight block-distortion values, we select the subsampled $(U, V)$-pair with the minimal block-distortion value as the candidate subsampled $(U, V)$-pair of $B^{UV}$, namely $(U_s^{(k+1),t}, V_s^{(k+1),t})$.\\
  {\bf Step 3:} If $D^t(U_s^{(k+1),t}, V_s^{(k+1),t})$ $\geq$ $D^t(U_s^{(k),t}, V_s^{(k),t})$, we stop the procedure and report $(U_s^{(k),t}, V_s^{(k),t})$ as the finally subsampled $(U, V)$-pair; otherwise, we perform $k := k+1$ and go to Step 2.

\end{algorithm}

The execution codes of our Improved chroma subsampling method for the input $t$ ($\in \{RGB, Bayer, DTDI\}$) color image can be accessed from the website \cite{ICSCI}.

\section{Experimental Results} \label{sec:V}
Under the VVC reference software platform VTM-8.0, based on the Kodak and IMAX datasets, the thorough experimental results demonstrated the quality and quality-bitrate tradeoff merits of our improved chroma subsampling  method, in which the execution codes can be accessed from the website in \cite{ICSCI}, for $I^{RGB}$, $I^{Bayer}$, and $I^{DTDI}$ relative to the comparative methods. A comparison of the execution time of the considered methods is also made.

All the considered methods are implemented on a computer with an Intel Core i7-8700 CPU 3.2 GHz and 24 GB RAM. The operating system is the Microsoft Windows 10 64-bit operating system. The program development environment is Visual C++ 2019.

\subsection{Quality Merit and Time Comparison} \label{sec:VA}
When setting QP to zero, the PSNR, CPSNR, SSIM \cite{Z.Wang} and FSIM \cite{FSIM} metrics are used to show the quality merit of the reconstructed images using our improved chroma subsampling method relative to the existing methods. In these experiments, the related results are computed by passing the compression process and decompression process.\\

\subsubsection{For $I^{RGB}$ }
The CPSNR metric of the reconstructed RGB full-color image is defined by

\begin{equation}
  \label{eq:CPSNR}
  \text{CPSNR}=\frac{1}{N}\sum_{n=1}^{N}10\log_{10}\frac{255^2}{CMSE}
\end{equation}
with $CMSE=\frac{1}{3WH}\sum_{p\in P}\sum_{C\in\{R,G,B\}}[I_{n,C}(p)-I_{n,C}^{rec}(p)]^2$ in which the test image is of size WxH.
$I_{n,C}(p)$ and $I_{n,C}^{rec}(p)$ denote the C-color pixel-values at position $p$ in the $n$th input RGB full-color image and the reconstructed analogue, respectively.
$N$ denotes the number of testing images in the dataset.
Here, $N$ is equal to 24 and 18 for the Kodak dataset and the IMAX dataset, respectively.
We first calculate the CPSNR value of each dataset, and then calculate the average CPSNR value of the three related CPSNR values.

Based on the two datasets for $I^{RGB}$, Table \ref{Tab:RGB Table} indicates that our method has the highest CPSNR in boldface among the considered seven methods.
The average CPSNR gains of our method are 1.9800 dB, 1.1918 dB, 2.0327 dB, 0.8019 dB, 1.1756 dB, and 0.9042 dB over 4:2:0(A), 4:2:0(L), 4:2:0(R), 4:2:0(DIRECT), 4:2:0(MPEG-B), and the IDID method \cite{Y.Zhang}, respectively, using the bilinear interpolation-based chroma upsampling process at the client side.
Furthermore, using the bicubic interpolation-based chroma upsampling process at the client side, the average CPSNR values of the considered methods are listed in parentheses in Table \ref{Tab:RGB Table}, and our method still has the highest CPSNR in boldface.
For each image, the average execution time requirement (in seconds) of each concerned method is tabulated in the last row of Table \ref{Tab:RGB Table}.
Although our method takes more time than the five traditional methods in {\bf CS}, our method has quality merit. When compared with IDID \cite{Y.Zhang}, our method takes much less time and has higher CPSNR.

Table \ref{Tab:Kw Table} demonstrates the average CPSNR merit of our method relative to the MCIM method \cite{S.Wang}.
The average CPSNR gain of our method is 0.6046 dB over the MCIM method using the bilinear interpolation-based chroma upsampling process at the client side.
In addition, using the bicubic interpolation-based chroma upsampling process, our method also still has higher CPSNR value, as listed in parenthesis in Table \ref{Tab:Kw Table}.
When compared with MCIM, our method takes much less time.

SSIM is used to measure the product of the luminance, contrast, and structure similarity preserving effect between the original image and the reconstructed image.
For $I^{RGB}$, the SSIM value is measured by the mean of the three SSIM values for the R, G, and B color planes.
FSIM is a good image quality assessment with high consistency with the subjective evaluation. FSIM first utilizes the primary feature “phase congruency (PC)” which is contrast invariant and the minor feature “gradient magnitude” to obtain the local quality map, and then FSIM utilizes PC as a weighting function to obtain a quality score. 

 Table \ref{Tab:RGB Table} demonstrates that our method has the highest SSIM and FSIM in boldface among the considered methods.
 Table \ref{Tab:Kw Table} indicates the SSIM and FSIM merits of our method relative to the MCIM method \cite{S.Wang}.

                                           
\subsubsection{For $I^{Bayer}$}
Since the Kodak and IMAX datasets are obtained by scanning the films, each scanned RGB full-color image can be taken as a ground truth RGB full-color image.
Therefore, the datasets used for $I^{Bayer}$ are created from the two datasets.
The PSNR, SSIM, and FSIM values are used to compare the quality of the reconstructed Bayer CFA images among the considered methods.Here, the reconstructed Bayer CFA images can be viewed as gray images when we measure their PSNR, SSIM, and FSIM values.

In Table \ref{Tab:CFA Table}, “BILI” denotes the bilinear interpolation-based chroma upsampling process used at the client side, and we consider the best combinations for the three comparative chroma subsampling methods, namely 4:2:0(A), the DI method \cite{C.Lin}, and the GD method \cite{Y.Lee}.
Table \ref{Tab:CFA Table} indicates that the PSNR gains of our combination, which denotes the combination ``our method-BILI", over 4:2:0(A)-BILI, DI-COPY \cite{C.Lin}, and GD-BILI \cite{Y.Lee} are 6.4080 dB, 2.9973 dB, and 1.3114 dB, respectively. Table \ref{Tab:CFA Table} also indicates that our combination has the highest SSIM and FSIM in boldface relative to the three comparative combinations.

The last row of Table \ref{Tab:CFA Table} demonstrates that the execution time (in seconds) comparison. 4:2:0(A) is the fastest chroma subsampling method among the four considered chroma subsampling methods. Our method is faster than the GD method \cite{Y.Lee}, but is slower than the DI method \cite{C.Lin}.

\subsubsection{For $I^{IDID}$}
The DTDI CFA images are collected from the two same datasets.
At the client side, under the BILI based chroma upsampling process, to compare the quality performance among the three considered methods, namely 4:2:0(A), the CD method \cite{W.Yang}, and our method, the three quality metrics, CPSNR, SSIM, and FSIM, are used to measure the quality of the reconstructed DTDI images.
Table \ref{Tab:DTDI Table} indicates that our method has the highest CPSNR, SSIM, and FSIM in boldface among the three considered methods, while the CD method is the fastest.
In particular, the CPSNR gains of our method over 4:2:0(A) and the CD method are 2.1959 dB and 0.9459 dB, respectively.

\begin{figure}[]
  \centering
    \subfigure[]{\includegraphics[height=4cm]{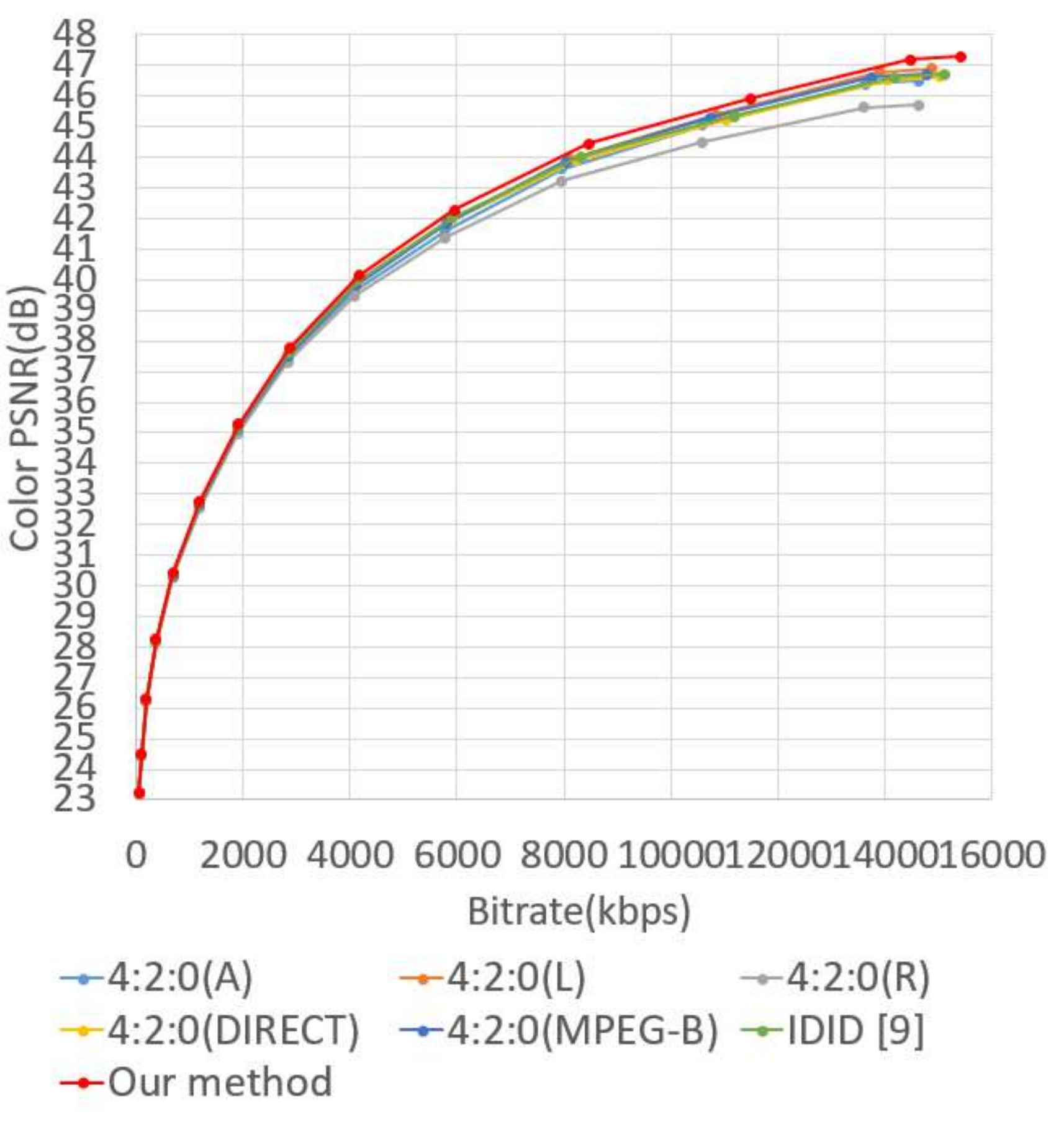}}
    \subfigure[]{\includegraphics[height=4cm]{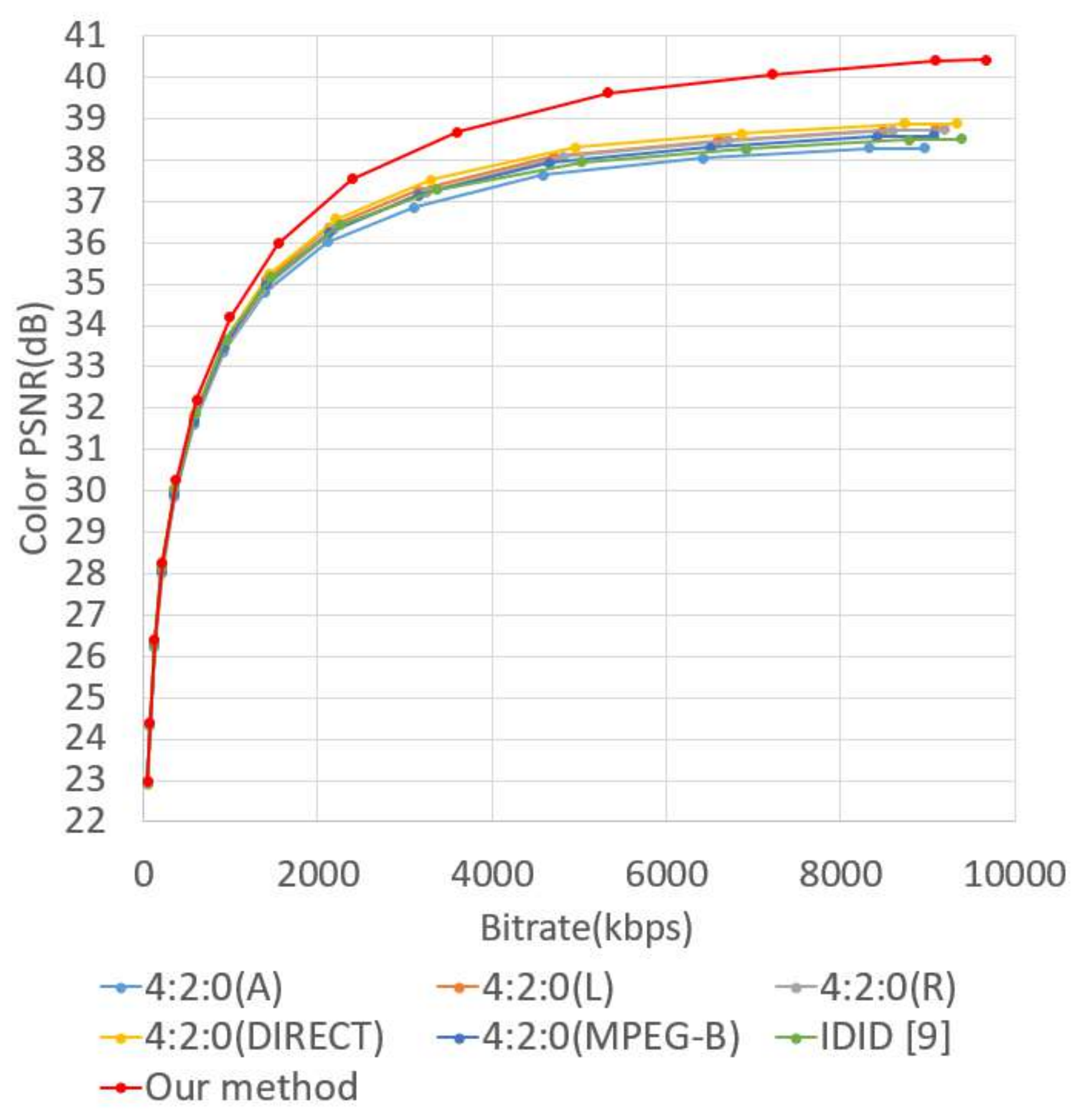}}
    \caption{Quality-bitrate tradeoff merit of our method for $I^{RGB}$. (a) For Kodak. (b) For IMAX.}
  \label{pic:RD}
\end{figure}
\subsection{Quality-bitrate Tradeoff Merit}
When setting QP = 0, 4, 8, 12, 16, 20, 24, 28, 32, 36, 40, 44, 48, and 51, the quality-bitrate tradeoff merit of our method for $I^{RGB}$, $I^{Bayer}$, and $I^{DTDI}$ is depicted by the RD (rate-distortion) curves for the reconstructed RGB full-color, Bayer CFA, and DTDI CFA images, respectively.
The bitrate of one compressed test set is defined by
\begin{equation} \text{bitrate}=\frac{B}{N} \end{equation}
where $B$ denotes the total number of bits required in compressing the test images.

For $I^{RGB}$, the RD curves corresponding to the Kodak dataset and the IMAX dataset are shown in Fig. \ref{pic:RD}(a) and Fig. \ref{pic:RD}(b), respectively,
in which the X-axis denotes the average bitrate required and the Y-axis denotes the average CPSNR value of the reconstructed RGB full-color images, indicating that under the same bitrate, our method has the highest CPSNR among the considered methods.

For $I^{Bayer}$ and $I^{DTDI}$, the RD curves corresponding to the Kodak and IMAX datasets are depicted in Figs. \ref{pic:Bayer RD}(a)-(b) and Figs. \ref{pic:DTDI RD}(a)-(b), respectively, indicating that under the same bitrate, our method has the best quality relative to the existing methods.

\begin{figure}[]
\centering
    \subfigure[]{\includegraphics[height=4cm]{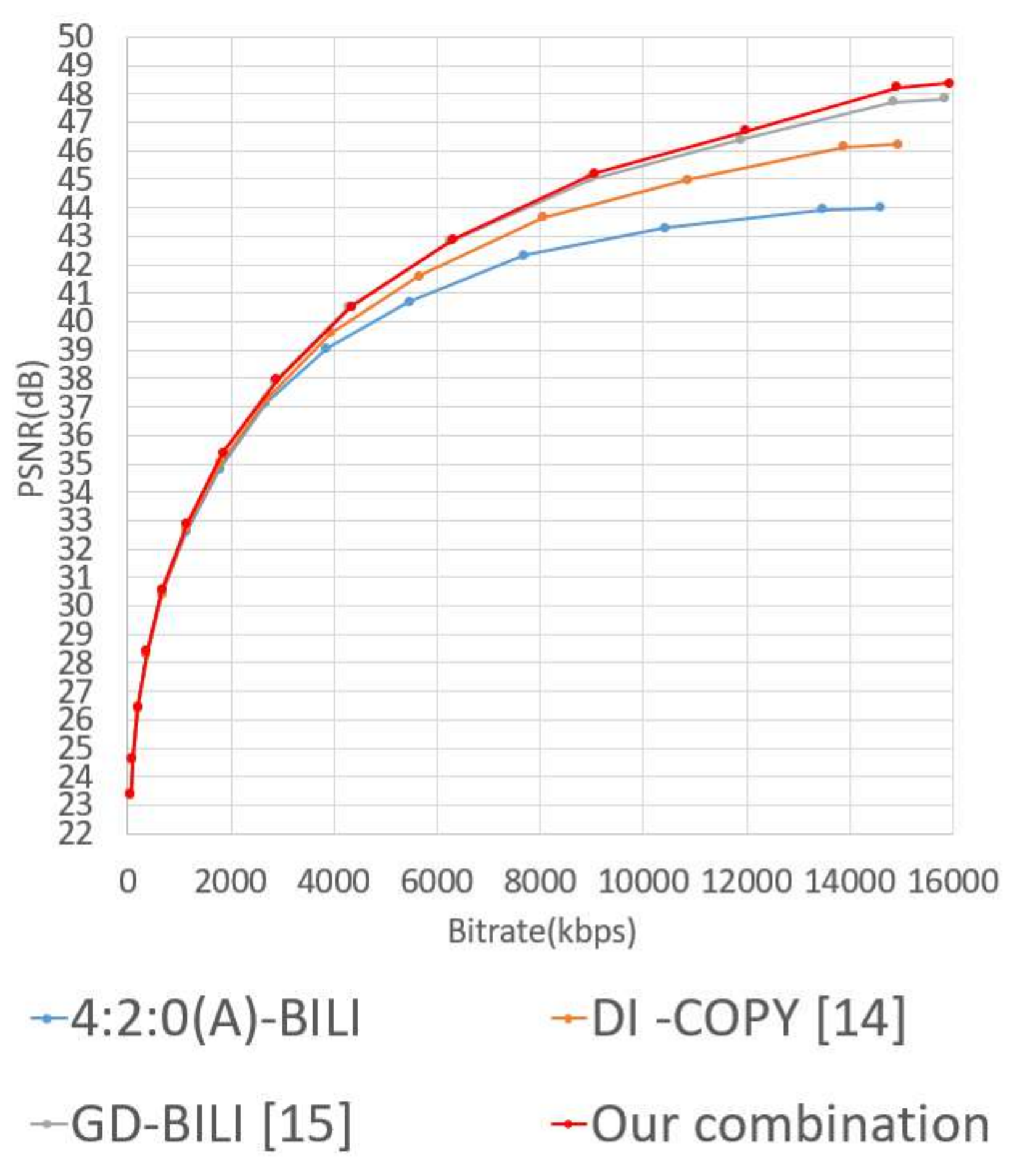}}
    \subfigure[]{\includegraphics[height=4cm]{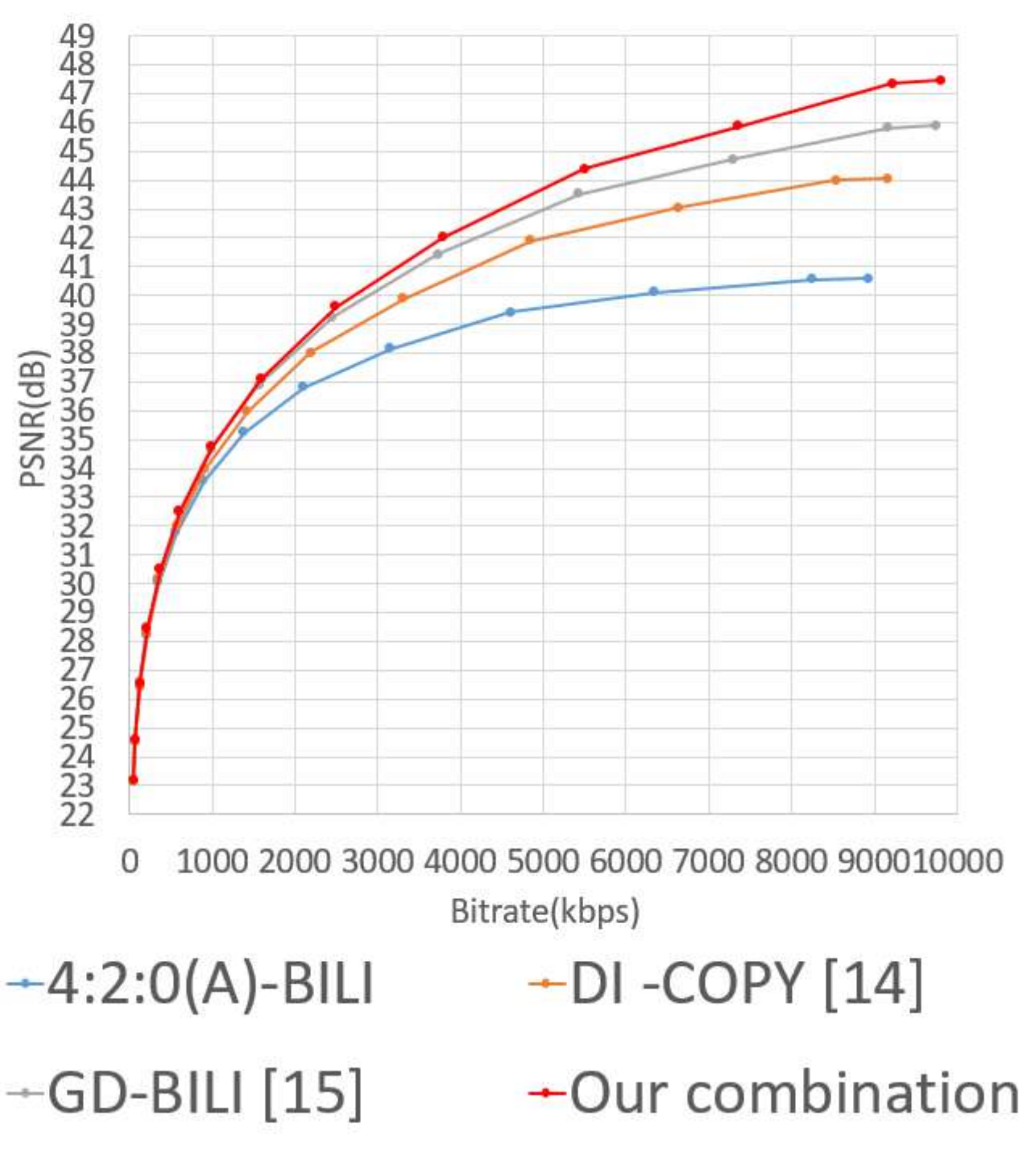}}
  \caption{Quality-bitrate tradeoff merit of our method for $I^{Bayer}$. (a) For Kodak. (b) For IMAX.}
  \label{pic:Bayer RD}
\end{figure}
\begin{figure}
  \centering
    \subfigure[]{\includegraphics[height=4cm]{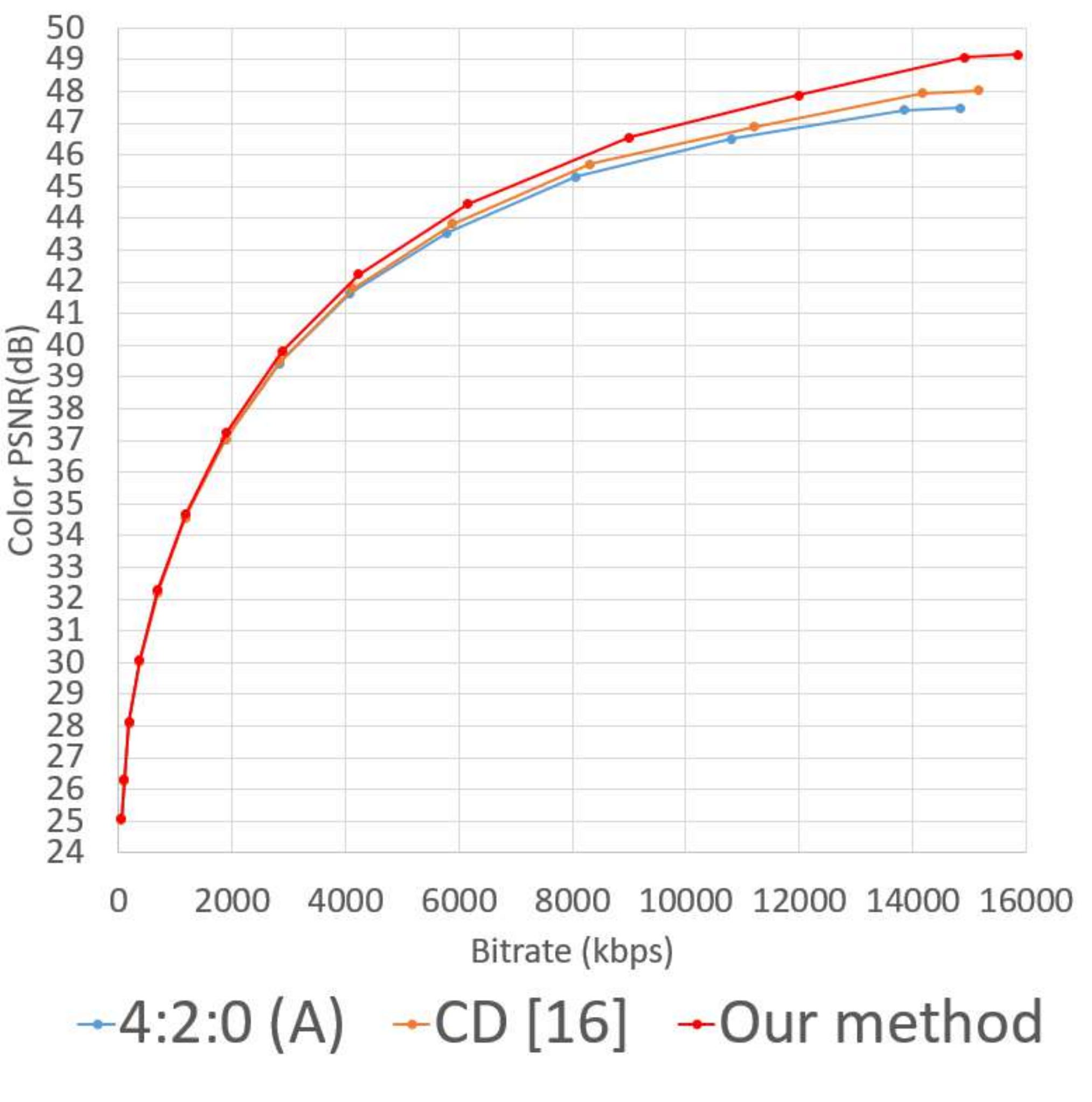}}
    \subfigure[]{\includegraphics[height=4cm]{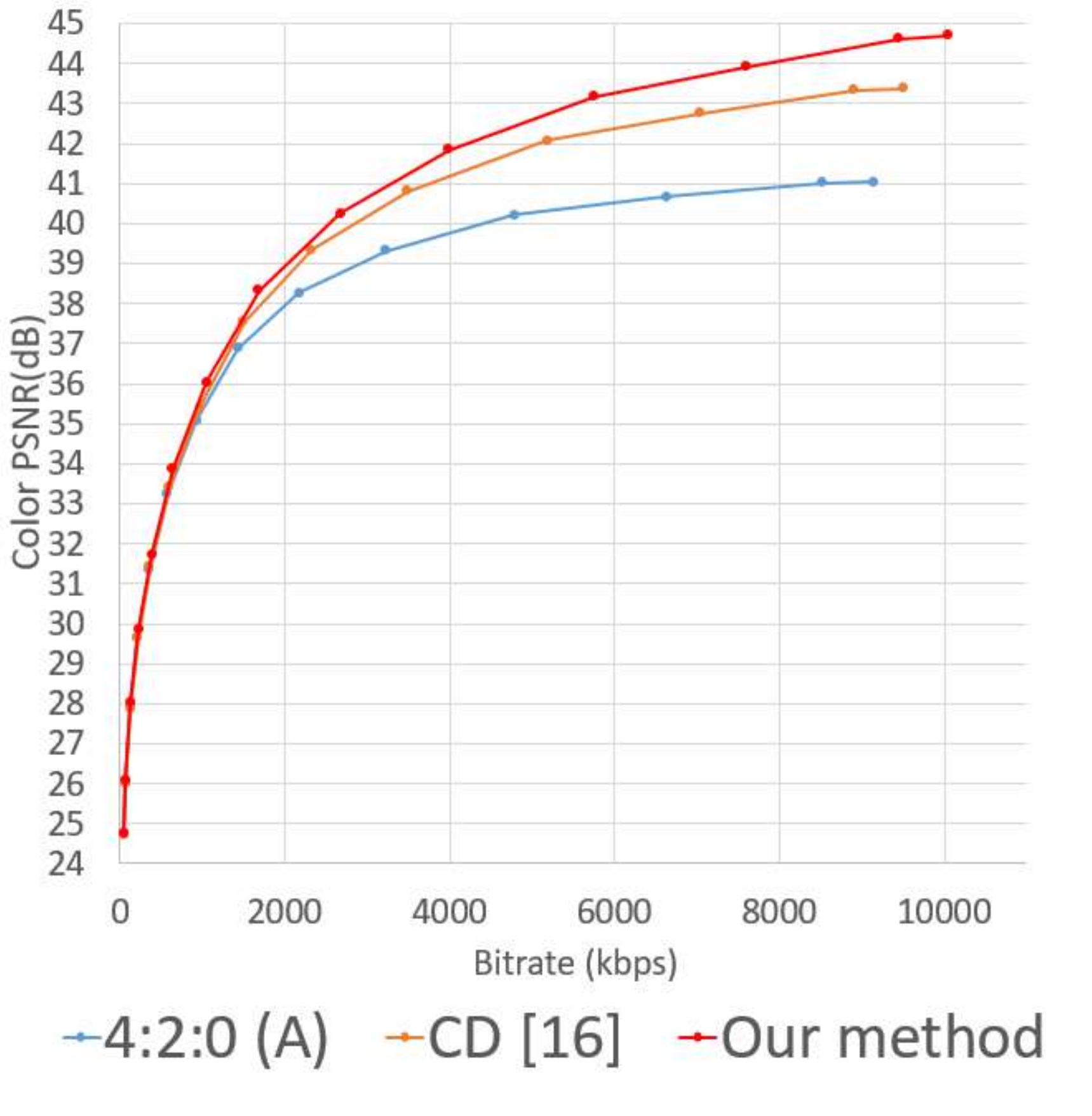}}
  \caption{Quality-bitrate tradeoff merit of our method for $I^{DTDI}$. (a) For Kodak. (b) For IMAX.}
  \label{pic:DTDI RD}
\end{figure}



\begin{table}
  \centering
    \caption{THE CPSNR, SSIM, FSIM, AND TIME COMPARISON AMONG THE CONSIDERED METHODS FOR $I^{DTDI}$.}
    \label{Tab:DTDI Table}
  \begin{tabular}{|c|c|c|c|}
\hline
$I^{DTDI}$                 & 4:2:0(A)                 & CD \cite{W.Yang}         & Our method                        \\ \hline
\multirow{2}{*}{CPSNR(dB)} & \multirow{2}{*}{44.2594} & \multirow{2}{*}{45.5094} & \multirow{2}{*}{\textbf{46.4553}} \\
                           &                          &                          &                                   \\ \hline
CPSNR Gain(dB)             & 2.1959                   & 0.9459                   & \textbf{}                         \\ \hline
SSIM (dB)                  & 0.9939                   & 0.9957                   & \textbf{0.9966}                   \\ \hline
SSIM Gain (dB)             & 0.0027                   & 0.0009                   & \textbf{}                         \\ \hline
FSIM (dB)                  & 0.99953                  & 0.99945                  & \textbf{0.99966}                  \\ \hline
FSIM Gain (dB)             & 0.00013                  & 0.00020                  & \textbf{}                         \\ \hline
Execution Time(s)          & 0.0013                   & \textbf{0.0008}          & 0.0291                            \\ \hline
\end{tabular}
 \end{table}

\section{Discussion and Concluding Remarks} \label{sec:VI}

We have presented our improved chroma subsampling method for $I^{RGB}$, $I^{Bayer}$, and $I^{DTDI}$.
First, we propose an improved 2$\times$2 $t$ ($\in \{RGB, Bayer, DTDI\}$) color block-distortion function, and then we prove the convex property of our improved $t$ color block-distortion function.
Next, using the differentiation technique in real domain, the closed form of the initially subsampled $(U, V)$-pair of each 2$\times$2 chroma block $B^{UV}$ is derived.
Furthemore, based on the shape similarity between the convex block-distortion function in real domain and that in the integer domain [0, 255] $\times$ [0, 255], we propose a gradient descent-based iterative method to better improve the subsampled $(U, V)$-pair of $B^{UV}$.
Based on the Kodak and IMAX datasets, the comprehensive experimental results have justified the quality and quality-bitrate tradeoff merits of our improved chroma subsampling method for the above-mentioned three kinds of images relative to the existing traditional and state-of-the-art chroma subsampling methods. 
In fact, given a color image $I^t$ ($\in \{I^{RGB}, I^{Bayer}, I^{DTDI}\}$), after performing our improved chroma subsampling method for $I^t$, we can also apply the chroma subsampling-first luma modification method \cite{chung2017joint} to further improve the quality of the reconstructed $t$ color images.

There are two research issues to be studied in the future. In the first future work, besides the BILI-based chroma upsampling process used to estimate the four $(U, V)$-pairs of $B^{UV}$, which are deployed into our improved 2$\times$2 block-distortion function at the server side, we hope to extend our current BILI-based chroma upsampling approach to the other nonlinear upsampling process to better improve the quality of the reconstructed color images.

In the second future work, we hope to deploy our improved chroma subsampling method into the adaptive chroma subsampling-binding and luma-guided (ASBLG) chroma upsampling method \cite{chung2017adaptive} at the client side, achieving better quality of the reconstructed screen content images.
However, because the subsampling position of our improved chroma subsampling method is exactly the same position as 4:2:0(A), it is very difficult to distinguish our improved method from 4:2:0(A) when using the winner-first voting strategy \cite{chung2017adaptive} at the client side to conclude the accurate correlation between the subsampled decoded luma image and decoded subsampled chroma image.
We also put it into the future work.

\appendix
 \section{THE PROOF OF PROPOSITION \ref{pro1}.}\label{FirstAppendix}

Without loss of generality, we only derive Eq. (\ref{eq:rec U}) for the estimated U component of $B^U$, namely $U'_1$. As depicted in Fig. \ref{pic:3x3Bilinear}, we assume that the coordinates of the three reference-known subsampled U components, namely $U_T$, $U_{TL}$, and $U_L$, are at the locations (1, 1), (0, 1), and (0, 0), respectively.
Therefore, $U'_1$ is located at (3/4, 1/4); the coordinate of the reference-unknown subsampled U component of $B^U$, namely $U_s$, is located at (1, 0).
Applying the bilinear interpolation on the four reference U components to estimate $U'_1$, it yields 
\begin{equation}\label{eq:u1}
  \begin{aligned}
    &{U'_1} = (\frac{3}{4})(1-\frac{1}{4})U_s
    +(1 - \frac{3}{4})(\frac{1}{4})U_{TL}\\
    &\quad\quad+(\frac{3}{4})(\frac{1}{4})U_T
    +(1-\frac{3}{4})(1-\frac{1}{4})U_L\\
    &\quad= \frac{9}{16}U_s+\frac{1}{16}U_{TL}+\frac{3}{16}U_{T}+\frac{3}{16}U_{L}
  \end{aligned}
\end{equation}
Following the above derivation, we can also derive Eq. (\ref{eq:rec U}) for the other three estimated U components of $B^U$, namely $U'_2$, $U'_3$, and $U'_4$.
We complete the proof.

\section{THE PROOF OF PROPOSITION \ref{thm:convex}.}\label{proof}
According to the quadratic convex function definition \cite{K.Binmore}, if a quadratic function is positive definite, it is called a convex function.
Therefore, we prove Proposition \ref{thm:convex} by proving that our quadratic 2$\times$2 $t$ ($\in \{RGB, Bayer, DTDI\}$) color block-distortion function $D^t(U_s, V_s)$ in Eq. (\ref{eq:General BD}) is positive definite.
If so, the quadratic function $D^t(U_s, V_s)$ is called a convex function.

To prove the positive definite property of $D^t(U_s, V_s)$, alternately, we prove that the determinant of the Hessian matrix of $D^t(U_s, V_s)$, denoted by $det(H(D^t(U_s, V_s)))$, is positive \cite{K.Binmore}.
First, the Hessian matrix of $D^t(U_s, V_s)$ is expressed by 
\begin{equation}\label{eq:General H1}
H(D^t(U_s, V_s))=
  \begin{bmatrix} \frac{\partial^2 D^t}{\partial U_s^2} & \frac{\partial^2 D^t}{\partial U_s\partial V_s}\\
                 \frac{\partial^2 D^t}{\partial V_s\partial U_s} & \frac{\partial^2 D^t}{\partial V_s^2}
  \end{bmatrix}
\end{equation}
with
\begin{equation}\label{eq:General H2}
\begin{aligned}
    \frac{\partial^2 D^t}{\partial U_s^2} &= 2w^2\sum\limits_{i=1}\limits^4\sum\limits_{c\in S^{t}_i}a_c^2\\
    \frac{\partial^2 D^t}{\partial V_s^2} &= 2w^2\sum\limits_{i=1}\limits^4\sum\limits_{c\in S^{t}_i}b_c^2\\
    \frac{\partial^2 D^t}{\partial U_s \partial V_s} &=\frac{\partial^2 D^t}{\partial V_s \partial U_s} = 2w^2\sum\limits_{i=1}\limits^4\sum\limits_{c\in S^{t}_i}a_cb_c
\end{aligned}
\end{equation}
Furthermore, the determinant of $H(D^t(U_s, V_s))$ is expressed as
\begin{equation}\label{eq:General detH}
\begin{small}
\begin{aligned}
  &det(H(D^t))\\
  &=4w^4(\sum\limits_{i=1}\limits^4\sum\limits_{c\in S^{t}_i}a_c^2\sum\limits_{i=1}\limits^4\sum\limits_{c\in S^{t}_i}b_c^2 -\sum\limits_{i=1}\limits^4\sum\limits_{c\in S^{t}_i}a_cb_c\sum\limits_{i=1}\limits^4\sum\limits_{c\in S^{t}_i}a_cb_c)\\
  &=64w^4(\sum\limits_{c\in S^{t}_i}a_c^2\sum\limits_{c\in S^{t}_i}b_c^2 - (\sum\limits_{c\in S^{t}_i}a_cb_c)^2).
\end{aligned}
\end{small}
\end{equation}
According to Eq. (\ref{eq:General detH}), we now prove that the value of $H(D^t(U_s, V_s))$, $t \in \{RGB, Bayer, DTDI\}$, 
is always positive for the considered three kinds of images.
Considering $t$ = ``$Bayer$'', by Eq. (\ref{eq:General detH}) and Eq. (\ref{eq:ab}), it yields

\begin{equation}\label{eq:Bayer detH}
    \begin{aligned}
        &detH(D^{Bayer})\\
        &=4w^4(\sum\limits_{i=1}\limits^4\sum\limits_{c\in S^{\text{\tiny $Bayer$}}_i}a_c^2\sum\limits_{i=1}\limits^4\sum\limits_{c\in S^{\text{\tiny $Bayer$}}_i}b_c^2 -(\sum\limits_{i=1}\limits^4\sum\limits_{c\in S^{\text{\tiny $Bayer$}}_i}a_cb_c)^2)\\
        &=4(\frac{9}{16})^4(((0.391)^2+0^2+(2.018)^2+(0.391)^2)\\
        &\quad\times((0.813)^2+(1.596)^2+0^2+(0.813)^2)\\
        &\quad-((0.391\times0.813)+(0\times1.596)\\
        &\quad+(2.018\times0)+(0.391\times0.813))^2)\\
        &=6.6216 > 0.
    \end{aligned}
\end{equation}
We have proved that the value of $det(H(D^{Bayer}))$ is positive. On the other hand, the block-distortion function $D^{Bayer}(U_s, V_s)$ is a convex function. Next, we consider $t$ = ``DTDI'', by Eq. (\ref{eq:General detH}), it yields
\begin{equation}\label{eq:DTDI detH}
  \begin{aligned}
    &detH(D^{DTDI})\\
    &=4w^4(\sum\limits_{i=1}\limits^4\sum\limits_{c\in S^{\text{\tiny $DTDI$}}_i}a_c^2\sum\limits_{i=1}\limits^4\sum\limits_{c\in S^{\text{\tiny $DTDI$}}_i}b_c^2 -(\sum\limits_{i=1}\limits^4\sum\limits_{c\in S^{\text{\tiny $DTDI$}}_i}a_cb_c)^2)\\
    &=4(\frac{9}{16})^4((4\times(0.391)^2+2\times(0^2)+2\times(2.018)^2)\\
    &\quad\times(4\times(0.813)^2+2\times(1.596)^2+2\times0^2)\\
    &\quad-(4\times(0.391\times0.813)+2\times(0\times1.596)\\
    &\quad+2\times(2.018\times0))^2)\\
    &=26.4863 > 0.
  \end{aligned}
\end{equation}
We have proved that $det(H(D^{DTDI})) > 0$, and it indicates that our block-distortion function $D^{DTDI}(U_s, V_s)$ is also a convex function.
When $t$ = ``$RGB$'', by Eq. (\ref{eq:General detH}), it can be proved that $det(H(D^{RGB})) = 86.2040 > 0$, indicating the convex property of $D^{RGB}(U_s, V_s)$; for reducing the paper length, we omit the detailed proof for it. We complete the proof. 

\section*{Acknowledgments}
This work was supported by the contracts MOST-$107$-$2221$-E-$011$-$108$-MY$3$ and MOST-$108$-$2221$-E-$011$-$077$-MY$3$ of Ministry of Science and Technology, Taiwan.
The authors appreciate the proofreading help of Ms. C. Harrington to improve the manuscript.

\bibliographystyle{unsrt}  
\bibliography{refs}  



\end{document}